\documentclass[12pt,twoside]{article}

 \usepackage{float}
\usepackage{graphicx}
\usepackage{epstopdf}
\usepackage{graphicx}
\usepackage{epic}
\usepackage{multirow}
\usepackage{pst-poly}  
\usepackage{pst-plot}  
\usepackage{pst-poly}  
\usepackage{tikz}
\usepackage{xcolor}
\usetikzlibrary{arrows,shapes,chains}

\renewcommand{\paragraph}{\roman{paragraph}}
\usepackage[a4paper]{geometry}
\setlength{\textwidth}{6.3in}
\setlength{\textheight}{8.8in}
\setlength{\topmargin}{0pt}
\setlength{\headsep}{25pt}
\setlength{\headheight}{0pt}
\setlength{\oddsidemargin}{0pt}
\setlength{\evensidemargin}{0pt}


\makeatletter
\renewcommand\title[1]{\gdef\@title{\reset@font\Large\bfseries #1}}
\renewcommand\section{\@startsection {section}{1}{\z@}%
                                   {-3.5ex \@plus -1ex \@minus -.2ex}%
                                   {2.3ex \@plus.2ex}%
                                   {\normalfont\large\bfseries}}
\renewcommand\subsection{\@startsection{subsection}{2}{\z@}%
                                     {-3ex\@plus -1ex \@minus -.2ex}%
                                     {1.5ex \@plus .2ex}%
                                     {\normalfont\normalsize\bfseries}}
\renewcommand\subsubsection{\@startsection{subsubsection}{3}{\z@}%
                                     {-2.5ex\@plus -1ex \@minus -.2ex}%
                                     {1.5ex \@plus .2ex}%
                                     {\normalfont\normalsize\bfseries}}

\def\@runningauthor{}\newcommand{\runningauthor}[1]{\def\runningauthor{#1}}
\def\@runningtitle{}\newcommand{\runningtitle}[1]{\def\runningtitle{#1}}

\renewcommand{\ps@plain}{%
\renewcommand{\@evenhead}{\footnotesize\scshape \hfill\runningauthor\hfill}
\renewcommand{\@oddhead}{\footnotesize\scshape \hfill\runningtitle\hfill}}

\newcommand{\F}{\mathbb{F}}
\newcommand{\x}{\mathbf{x}}
\newcommand{\A}{\mathbf{a}}

\newcommand {\ccc}{{\mathbf{c}}}
\newcommand {\dd}{\mathbf{d}}
\newcommand {\0}{\mathbf{0}}
\pagestyle{plain}

\g@addto@macro\bfseries{\boldmath}

\makeatother



\usepackage{amsthm,amsmath,amssymb}
\usepackage{cite}
\usepackage{graphicx}

\usepackage[colorlinks=true,citecolor=black,linkcolor=black,urlcolor=blue]{hyperref}

\theoremstyle{plain}
\newtheorem{theorem}{Theorem}
\newtheorem{lemma}[theorem]{Lemma}

\newtheorem{proposition}[theorem]{Proposition}

\theoremstyle{definition}

\newtheorem{example}[theorem]{Example}
\newtheorem{conjecture}[theorem]{Conjecture}

\theoremstyle{remark}
\newtheorem{remark}[theorem]{Remark}



\title{The connections among Hamming metric, $b$-symbol metric, and $r$-th generalized Hamming metric
}

\runningtitle{The connections among Hamming metric, $b$-symbol metric, and $r$-th generalized Hamming metric}


\author{Minjia Shi\thanks{School of Mathematical Sciences, Anhui University, Hefei, China. E-mail: smjwcl.good@163.com}
\and Hongwei Zhu \thanks{ School of Mathematical Sciences, Anhui University, Hefei, China. E-mail: zhwgood66@163.com}
\and Tor Helleseth\thanks{Department of Informatics, University of Bergen, Bergen, Norway. E-mail: tor.helleseth@uib.no}
}


\runningauthor{}

\date{}

\begin{document}

\maketitle

\thispagestyle{empty}

\begin{abstract}
The $r$-th generalized Hamming metric and the $b$-symbol metric are two different generalizations of Hamming metric. The former is used on the wire-tap channel of Type II, and the latter is motivated by the limitations of the reading process in high-density data storage systems and applied to a read channel that outputs overlapping symbols. In this paper, we study the connections among the three metrics (that is, Hamming metric, $b$-symbol metric, and $r$-th generalized Hamming metric) mentioned above and give a conjecture about the $b$-symbol Griesmer Bound for cyclic codes. 
\end{abstract}
{\bf Keywords:} Hamming metric, $b$-symbol metric, $r$-th generalized Hamming metric, unrestricted codes, Griesmer Bound \\
{\bf MSC(2010):} 94 B15, 94 B25, 05 E30

\section{Introduction}
The concept of $r$-th generalized Hamming metric first appeared in the 1970s and was proposed by Helleseth, Kl{\o}ve and Mykkeltveit \cite{Hell2,K1}. Until 1991,
in Wei's research \cite{Wei} on wire-tap channel of Type II, Wei mentioned this concept again and provided a series of excellent conclusions. Subsequently, many researchers studied the weight hierarchy of several series of linear codes (e.g. RM codes \cite{Hei1,Wei}, BCH codes \cite{Cheng1,Geer1,Geer2}, trace codes \cite{Stich}, cyclic codes \cite{Feng1,Janwa}, etc.). The bounds, asymptotic behaviour, and the duality under $r$-th generalized Hamming metric have been considered in \cite{Ash,Hell3,HKY,K1,Tsfa,Wei}. In addition to its applications in wire-tap channels of type II, the $r$-th generalized Hamming metric is also used to address the $t$-resilient functions and trellis or branch complexity of linear codes \cite{Tsfa}.

The $b$-symbol metric is another generalization of the Hamming metric that has been proposed by Cassuto et al. \cite{CB1,CB} in recent years. It differs from $r$-th generalized Hamming metric in that its research motivation is not derived from data storage or cryptography but from other domains such as molecular biology and chemistry. In these domains, the information redundancy is so low that the only effective way to combat errors is to transmit the same message over and over again (overlapping symbols). Although in practical applications,  consecutive symbols may affect each other, in the traditional read channel, people always assume that the adjacent symbols are individual. However, with the development of the high-density data storage technologies, this is no longer a reasonable assumption, and symbols are faced with the need to be read repeatedly (since the bit size at high-densities is small, it is hard to read the individual bits). This is why we have to pay attention to the $b$-symbol channel, which is a read channel suitable for the output of overlapping $b$-symbols. Under this new metric (or paradigm), errors are no longer single symbolic errors but $b$-symbolic errors.
At present, the research progress of the $b$-symbol metric includes the bounds of codes (e.g. $b$-symbol Sphere Packing  Bound \cite{CB1,CB}, $b$-symbol Singleton Bound \cite{C+,DTG}, $b$-symbol Linear Programming Bound \cite{Eli}, $b$-symbol asymptotic bound \cite{CL}, etc), the decoding and the constructions. The research on the codes that reach the $b$-symbol Singleton Bound (such codes are called $b$-symbol MDS codes) is a hot topic, and a lot of relevant research progress has been achieved \cite{C+,C+1,CLL,DGZ,DTG,KZL}. It is very difficult to determine the $b$-symbol weight distribution or the minimum $b$-symbol distance of linear codes. Nevertheless, in some special cases, the $b$-symbol weight distributions are determined \cite{SOS,Z,ZHW}.

These two metrics are widely concerned by researchers because they are generalizations of the Hamming metric.
Our motivation is to investigate the connections and differences between these two metrics. Since the $r$-th generalized Hamming metric has a longer history than the $b$-symbol metric, we can refer to the research progress of $r$-th generalized Hamming metric when we consider the $b$-symbol metric. In this paper, we first establish the connection between the Hamming metric and the $b$-symbol metric. Although the connection has been considered in \cite{Yaa,Yaa1}, we get better results (e.g., Theorem 4 is a generalization of Lemma 1 in \cite{Yaa,Yaa1}, and Theorem 6 is a generalization of Proposition 2 in \cite{Yaa,Yaa1}). Subsequently, We compare the same linear codes under the $b$-symbol metric and $r$-th generalized Hamming metric.
 When $C$ is cyclic, we prove that $d_b(C)\geq \dd_b(C)$, where $d_b(C)$ and $\dd_b(C)$ denote the minimum $b$-symbol weight and the minimum $b$-th generalized Hamming weight of $C$, respectively. In fact, the two metrics have a lot in common when $C$ is cyclic.
 We also propose a conjecture on the $b$-symbol Griesmer Bound for cyclic codes. 

The rest of the paper is organized as follows. In Section II, we introduce some notations and definitions. In Section III, we show the connections among Hamming metric, $b$-symbol metric, and $r$-th generalized Hamming metric. Section IV concludes the paper. 
\section{Preliminaries}
Throughout this paper, we assume  and fix the following:
\begin{itemize}
  \item $\F_q:$ finite field with $q$ elements.
  \item $\F_q^*=\F_q\backslash\{0\}$.
  \item $p={\rm Char}(\F_q).$
  \item $\x,\mathbf{y}$ are two vectors which belong to $\F_q^n$.
\end{itemize}
\subsection{Hamming metric}
\begin{itemize}
  \item Hamming weight $w_H(\mathbf{x})$: the number of nonzero coordinates in $\mathbf{x}.$
  \item Hamming distance $d_H(\mathbf{x},\mathbf{y})$: the number of coordinates in which $\mathbf{x}$ and $\mathbf{y}$ differ.
\end{itemize}
\subsection{The $b$-symbol metric}
Let $b$ be a positive integer with $1\leq b\leq n.$
\begin{itemize}
  \item $b$-symbol weight $w_b(\x)$: the Hamming weight of $\pi_b(\x)$, where $\pi_b(\x)\in (\F_q^b)^n$ and
$$\pi_b(\x)=((x_0,\ldots,x_{b-1}),(x_{1},\ldots,x_{b}),
\cdots,(x_{n-1},\ldots,x_{b+n-2({\rm mod}~n)})).$$
  \item $b$-symbol distance $d_b(\x,\mathbf{y})$: $d_b(\x,\mathbf{y})=w_b(\x-\mathbf{y}).$
\end{itemize}

When $b=1$, $w_1(\x)=w_H(\x)$ and $d_1(\x,\mathbf{y})=d_H(\x,\mathbf{y})$. For convenience, we use $w_1(\x)$ and $d_1(\x,\mathbf{y})$ to represent $w_H(\x)$ and $d_H(\x,\mathbf{y})$, respectively.
For $\eta\in \F_q$ and $\lambda, \xi\in \F_q\setminus\{\eta\}$, if $\eta,\underbrace{\lambda,\ldots,\lambda}_m, \xi$ appears in the sequence $\mathbf{\overline{a}}$, then we say that $\underbrace{\lambda,\ldots,\lambda}_m$ is a run of $\lambda$'s of length $m$. Let $\mathbf{0}_i=(\alpha,\underbrace{0,\ldots,0}_i,\beta)$, where $\alpha, \beta\in\F_q^*$.
For any vector $\mathbf{a}=(a_0,a_1,\ldots,a_{n-1})\in \F_q^n$, we define a circumferential vector $cir(\mathbf{a})$ as follows:
\begin{center}\begin{tikzpicture}
$\def \radius {1.8cm}
\def\startDegree{50}
\def\n{10}
\foreach \s in {1,...,\n}
{
  \draw[<-, >=latex] ({360/\n * (\s - 1)-\startDegree}:\radius)
    arc ({360/\n * (\s - 1)-\startDegree}:{360/\n * (\s - 1)-\startDegree+360/\n}:\radius);
}
$,
\put(-12,56){\fontsize{12}{2.5}\selectfont $ a_0$}
\put(25,50){\fontsize{12}{2.5}\selectfont $ a_1 $}
\put(50,22){\fontsize{12}{2.5}\selectfont $ a_2 $}
\put(-6,-42){\fontsize{12}{2.5}\selectfont $ \ldots $}
\put(-80,12){\fontsize{12}{2.5}\selectfont $ a_{n-2} $}
\put(-60,42){\fontsize{12}{2.5}\selectfont $ a_{n-1} $}
\end{tikzpicture}
\end{center}
 and $\Psi(\mathbf{0}_i)$ denotes the number of occurrences of $\mathbf{0}_i$ on the circumferential vector $cir(\mathbf{a})$.
 The $\mathbf{0}$'s run distribution of $\mathbf{a}$ is defined by
$\Psi(\mathbf{a})=\{\Psi(\mathbf{0}_1), \Psi(\mathbf{0}_{2}),\ldots, \Psi(\mathbf{0}_{n})\}.$
\begin{example}
Let $\mathbf{a}=(01001000100)$. Then the $\mathbf{0}$'s run distribution of $\mathbf{a}$ is
$$\Psi(\mathbf{a})=\{\Psi(\mathbf{0}_2)=1,\Psi(\mathbf{0}_3)=2,\Psi(\mathbf{0}_i)=0, i\neq 2,3\}.$$
\end{example}

For any vector $\mathbf{c}=(c_0,c_1,\ldots,c_{n-1})\in\F_q^n$, by the definition of $b$-symbol weight, then we have
$$w_b(\mathbf{c})=n-|\{i|c_i=c_{i+1}=\cdots=c_{i+b-1}=0,0\leq i\leq n-1\}|.$$
If $\Psi(\ccc)$ is given, we have the following formula to calculate the $b$-symbol weight of $\ccc$.

 \begin{theorem}\label{wbformula}
 For any vector $\mathbf{c}=(c_0,c_1,\ldots,c_{n-1})\in\F_q^n$, we have
\begin{equation}\label{formula}
  w_b(\ccc)=n-\sum\limits_{i=b}^{n-1}(i-b+1)\cdot \Psi(\mathbf{0}_i).
\end{equation}
\end{theorem}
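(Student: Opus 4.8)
The plan is to count, for the vector $\ccc$, exactly how many of the $n$ cyclic windows $(c_i, c_{i+1}, \ldots, c_{i+b-1})$ (indices mod $n$) are all-zero, since by the displayed formula just above the statement, $w_b(\ccc) = n - |\{i : c_i = c_{i+1} = \cdots = c_{i+b-1} = 0\}|$. So it suffices to prove
\begin{equation*}
|\{i : c_i = c_{i+1} = \cdots = c_{i+b-1} = 0,\ 0 \le i \le n-1\}| = \sum_{i=b}^{n-1} (i-b+1)\,\Psi(\mathbf 0_i).
\end{equation*}
First I would handle the degenerate case where $\ccc = \mathbf 0$ separately (then every window is zero, $w_b = 0$, and there is nothing of the form $\mathbf 0_i$ on the circle for $i < n$; one checks the convention matches), and then assume $\ccc$ has at least one nonzero entry. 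In that case, reading around $cir(\ccc)$, the maximal runs of consecutive zeros are well-defined and disjoint: each maximal zero-run of length exactly $i$ (bounded on both sides by nonzero entries) is precisely one occurrence of $\mathbf 0_i$, and these occurrences, over all $i$, partition the set of zero positions into blocks.

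The key combinatorial step is the local count: a window of $b$ consecutive positions lies entirely inside a fixed maximal zero-run of length $i$ if and only if its starting index is one of the first $i-b+1$ positions of that run, and this happens only when $i \ge b$; moreover every all-zero window sits inside exactly one maximal zero-run (it cannot straddle a nonzero entry). Hence the number of all-zero windows contributed by the $\Psi(\mathbf 0_i)$ runs of length $i$ is $(i-b+1)\,\Psi(\mathbf 0_i)$ when $i \ge b$ and $0$ when $i < b$, and summing over $i$ from $b$ to $n-1$ gives the total number of all-zero windows. (Note $i$ ranges only up to $n-1$ precisely because we assumed $\ccc \ne \mathbf 0$, so no zero-run has length $n$.) Substituting into the displayed formula for $w_b(\ccc)$ yields \eqref{formula}.

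The only real obstacle is bookkeeping around the cyclic wrap-around: one must be careful that a window $(c_i, \ldots, c_{i+b-1})$ with indices taken mod $n$ is still counted inside a single maximal zero-run even when the run itself wraps past index $n-1$, and that no zero-run is double-counted. This is handled cleanly by working on the circumferential vector $cir(\ccc)$ rather than on the linear string: on the circle, "maximal zero-run of length $i$" and "occurrence of $\mathbf 0_i$" literally coincide by definition of $\Psi(\mathbf 0_i)$, so the partition of zero positions into maximal runs is unambiguous and the local count above applies verbatim. I expect the whole argument to be short once this cyclic viewpoint is fixed.
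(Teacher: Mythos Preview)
Your proposal is correct and follows essentially the same approach as the paper: both reduce to counting the all-zero $b$-windows and observe that a maximal zero-run of length $i$ contributes exactly $\max(0,i-b+1)$ such windows, then sum over the run distribution. Your write-up is simply more explicit about the cyclic bookkeeping and the degenerate case than the paper's three-line argument.
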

\begin{proof}
By the definition of the $b$-symbol metric, the zero coordinate of $\ccc$ is $b$ cyclic consecutive zeros.
For a  $\0_i=(\alpha,\underbrace{0,\ldots,0}_i,\beta)$, there are $i-b+1$ zero coordinates if $i\geq b$ and there is no zero coordinate in $\0_i$ if $i<b$. Counting the number of the zero coordinates, we obtain the desired result.
\end{proof}

From Theorem \ref{wbformula}, one can induce the following lemma directly.
\begin{lemma}\label{Lem1}
Let $\mathbf{x}$ and $ \mathbf{y}$ be two vectors that belong to $\F_q^n$, then $w_{b}(\mathbf{x})=w_{b}(\mathbf{y})$ if $\Psi(\mathbf{x})=\Psi(\mathbf{y})$, where $1\leq b\le n$.
\end{lemma}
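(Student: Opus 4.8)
The plan is to derive Lemma~\ref{Lem1} as an immediate consequence of the closed formula in Theorem~\ref{wbformula}. The key observation is that the right-hand side of \eqref{formula}, namely $n-\sum_{i=b}^{n-1}(i-b+1)\cdot\Psi(\mathbf{0}_i)$, depends on the vector $\ccc$ \emph{only} through the quantities $\Psi(\mathbf{0}_b),\Psi(\mathbf{0}_{b+1}),\ldots,\Psi(\mathbf{0}_{n-1})$, together with the fixed ambient parameters $n$ and $b$. These quantities are precisely (a subset of) the entries of the $\mathbf{0}$'s run distribution $\Psi(\ccc)=\{\Psi(\mathbf{0}_1),\ldots,\Psi(\mathbf{0}_n)\}$.

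Concretely, I would argue as follows. Let $\mathbf{x},\mathbf{y}\in\F_q^n$ with $\Psi(\mathbf{x})=\Psi(\mathbf{y})$. By definition of equality of the run distributions, $\Psi_{\mathbf{x}}(\mathbf{0}_i)=\Psi_{\mathbf{y}}(\mathbf{0}_i)$ for every $i$ with $1\leq i\leq n$; in particular this holds for every $i$ with $b\leq i\leq n-1$. Applying Theorem~\ref{wbformula} to $\mathbf{x}$ and then to $\mathbf{y}$,
\[
w_b(\mathbf{x})=n-\sum_{i=b}^{n-1}(i-b+1)\cdot\Psi_{\mathbf{x}}(\mathbf{0}_i)
=n-\sum_{i=b}^{n-1}(i-b+1)\cdot\Psi_{\mathbf{y}}(\mathbf{0}_i)=w_b(\mathbf{y}),
\]
which is the claim.

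There is essentially no obstacle here: once Theorem~\ref{wbformula} is available, the lemma is a one-line substitution, since the formula exhibits $w_b$ as a function of the run-distribution data alone. The only point worth a sentence of care is that the formula uses only the indices $i\ge b$, so the agreement of the two run distributions on those indices is all that is needed (agreement on $\Psi(\mathbf{0}_i)$ for $i<b$ is irrelevant to $w_b$); but since $\Psi(\mathbf{x})=\Psi(\mathbf{y})$ gives agreement on all indices, this is automatic. I would therefore present the proof in just the two or three lines above, noting that it follows "directly" from Theorem~\ref{wbformula} exactly as the surrounding text promises.
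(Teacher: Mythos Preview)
Your argument is correct and matches the paper's approach exactly: the paper states that the lemma follows directly from Theorem~\ref{wbformula}, and your substitution of the equal run distributions into formula~\eqref{formula} is precisely that direct derivation.
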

\begin{remark}\label{r4}
A linear code $C$ is called a single $\mathbf{0}$'s run code if for any $\ccc, \ccc^{\prime}\in C\backslash\{\mathbf{0}\}$, $\Psi(\ccc)=\Psi(\ccc^{\prime}).$ For example, the simplex code with parameters $[\frac{q^k-1}{q-1},k]$ is a single $\mathbf{0}$'s run code with the only $\mathbf{0}$'s run distribution
\begin{equation*}
  \Psi(\mathbf{0}_i)=
\left\{
  \begin{array}{ll}
    (q-1)q^{k-2-i}, & \hbox{$1\leq i\leq k-2$;} \\
    1, & \hbox{$i=k-1$;} \\
    0, & \hbox{$i\geq k$.}
  \end{array}
\right.
\end{equation*}
Its $\mathbf{0}$'s run distribution can be deduced from \cite[Page.123, R-2]{Golomb}. By Theorem \ref{wbformula}, we have the minimum $b$-symbol weight of the simplex code is $\frac{(q^b-1)q^k}{q^b(q-1)}.$ Obviously, the simplex code is also a single $b$-symbol weight linear code.
\end{remark}
The single $\mathbf{0}$'s run codes are of interest in their own right. The following result gives a class of single $\mathbf{0}$'s run codes and their parameters.
\begin{theorem}\label{single1}Let $\Delta$ be a factor of $q^k-1$.
Let $n=\frac{q^k-1}{\Delta}$ and $C$ be an irreducible cyclic code over $\F_q$ with parity-check polynomial $h(x)$ of degree $\deg(h(x))=k$ and period (or order) ${\rm per}(h(x))=n$.
If $\Delta|q-1$ and $\gcd(n,\Delta)=1$, then $C$ is a single $\mathbf{0}$'s run code with parameters $[n,k,d_b(C)=
\frac{(q^b-1)q^k}{q^b\Delta}]_q$ where $1\leq b\leq k-1$.
\end{theorem}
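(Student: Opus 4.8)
The plan is to analyze the structure of the codewords of an irreducible cyclic code via their trace representation and show that the $\mathbf{0}$'s run distribution is the same for every nonzero codeword. Recall that an irreducible cyclic code $C$ with parity-check polynomial $h(x)$ of degree $k$ and period $n=\frac{q^k-1}{\Delta}$ can be written in trace form: every codeword is of the shape $\ccc_\gamma=\bigl(\mathrm{Tr}_{\F_{q^k}/\F_q}(\gamma\theta^j)\bigr)_{j=0}^{n-1}$ for some $\gamma\in\F_{q^k}$, where $\theta$ is an element of $\F_{q^k}^*$ of order $n$. First I would set up this representation carefully and observe that the support pattern of $\ccc_\gamma$ — in particular the positions $j$ with $c_j=0$ and hence the entire $\mathbf{0}$'s run distribution $\Psi(\ccc_\gamma)$ — depends only on the coset $\gamma \langle\theta\rangle$ inside $\F_{q^k}^*/\langle\theta\rangle$, since multiplying $\gamma$ by $\theta$ just cyclically shifts the codeword. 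So there are at most $\Delta$ distinct $\mathbf{0}$'s run distributions a priori, indexed by these cosets.

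The heart of the argument is to use the two hypotheses $\Delta\mid q-1$ and $\gcd(n,\Delta)=1$ to collapse these $\Delta$ cosets into a single $\mathbf{0}$'s run distribution. The key point is this: since $\Delta\mid q-1$, the subgroup $\F_q^*\le \F_{q^k}^*$ has a subgroup of order $\Delta$; and since $\gcd(n,\Delta)=1$, the factorization $q^k-1=n\Delta$ is into coprime parts, so $\F_{q^k}^* \cong \langle\theta\rangle \times H$ where $H$ is the unique subgroup of order $\Delta$, and moreover $H\le \F_q^*$. Now scaling a codeword $\ccc_\gamma$ by a scalar $\mu\in\F_q^*$ replaces $\gamma$ by $\mu\gamma$ but does not change which coordinates are zero — zero coordinates stay zero, nonzero stay nonzero — hence $\Psi(\mu\ccc_\gamma)=\Psi(\ccc_\gamma)$. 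Choosing $\mu$ to run over $H$, the cosets $\gamma\langle\theta\rangle$, $\mu\gamma\langle\theta\rangle$ over all $\mu\in H$ are exactly all $\Delta$ cosets of $\langle\theta\rangle$ in $\F_{q^k}^*$ (because $H$ is a transversal of $\langle\theta\rangle$), so all $\Delta$ candidate distributions coincide. Therefore $\Psi(\ccc)$ is the same for every nonzero $\ccc\in C$, i.e. $C$ is a single $\mathbf{0}$'s run code.

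Once this is established, the parameters follow quickly. The dimension is $k$ because $\deg h(x)=k$. For the minimum $b$-symbol distance, by Lemma~\ref{Lem1} and Theorem~\ref{wbformula} it suffices to compute $w_b$ on a single nonzero codeword, and since $C$ is a single $\mathbf{0}$'s run code $d_b(C)=w_b(\ccc)$ for any $\ccc\neq\mathbf{0}$. Using the standard fact that an irreducible cyclic code of this type has constant Hamming weight (it is a "one-weight" code, each nonzero coordinate appearing equally often), a counting/averaging argument over the $n$ cyclic shifts together with formula~\eqref{formula} pins down $\sum_{i\ge b}(i-b+1)\Psi(\mathbf{0}_i)$ and yields $d_b(C)=\frac{(q^b-1)q^k}{q^b\Delta}$; one sanity check is $b=1$, which must return the classical one-weight value $\frac{(q-1)q^{k-1}}{\Delta}$, and the formula does.

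I expect the main obstacle to be the second paragraph: rigorously justifying that the hypotheses force $H\le\F_q^*$ and that $H$ is a transversal for $\langle\theta\rangle$, and hence that scalar multiplication by $\F_q^*$ (a symmetry of the code that obviously preserves $\Psi$) already acts transitively on the $\langle\theta\rangle$-cosets that parametrize the possible $\mathbf{0}$'s run distributions. The coprimality $\gcd(n,\Delta)=1$ is exactly what makes $\F_{q^k}^*=\langle\theta\rangle\times H$ an internal direct product, and $\Delta\mid q-1$ is exactly what puts $H$ inside $\F_q^*$; getting both of these to interlock cleanly is the crux, after which everything else is bookkeeping with Theorem~\ref{wbformula}.
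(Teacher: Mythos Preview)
Your argument for the single $\mathbf{0}$'s run property is correct and is essentially the same idea as the paper's, just expressed in trace/group-theoretic language rather than in the sequence/state language the paper uses. The paper shows directly that for $\delta$ in the order-$\Delta$ subgroup of $\F_q^*$, the codewords $\x,\delta\x,\ldots,\delta^{\Delta-1}\x$ lie in pairwise distinct cyclic-shift classes; since there are exactly $\Delta$ such classes of nonzero codewords, every nonzero codeword is a scalar multiple of a cyclic shift of $\x$ and hence has the same $\mathbf{0}$'s run distribution. Your internal direct product $\F_{q^k}^*=\langle\theta\rangle\times H$ with $H\le\F_q^*$ is the group-theoretic repackaging of exactly this statement, and your identification of the two hypotheses with the two needed ingredients is spot on.

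Where you diverge from the paper is in the computation of $d_b(C)$. The paper determines the full distribution $\Psi(\mathbf{0}_i)$ explicitly (by counting LFSR states containing a prescribed run of zeros) and then plugs into Theorem~\ref{wbformula}. Your proposed averaging shortcut also works, but as written it is vague: ``averaging over the $n$ cyclic shifts'' cannot help, since cyclic shifts have identical $b$-symbol weight. What you actually want is to average over all nonzero codewords. The clean way to finish is to note that in a cyclic code of dimension $k$ any $b\le k$ consecutive positions are independent (this is Lemma~\ref{shu} in the paper), so $\sum_{\ccc\in C}w_b(\ccc)=n\,q^{k-b}(q^b-1)$; dividing by $q^k-1$ and substituting $n=(q^k-1)/\Delta$ gives $d_b(C)=\frac{(q^b-1)q^k}{q^b\Delta}$ immediately. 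This is arguably tidier than the paper's explicit run-distribution count, but you should state the ``$b$ consecutive positions are independent'' fact rather than the red herring about constant Hamming weight.
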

\begin{proof}
Let $\bar{\mathbf{a}}=(a_0,a_1,a_2,\ldots)$ be a nonzero sequence. We use $G(h^*(x))$ to denote the set consisting of all sequences with $h^*(\tau)\bar{\mathbf{a}}=0$ where $\tau$ denotes the left shift operator. Let $\alpha$ be a primitive element of $\F_{q}$.
If $\gcd({\rm per}(h(x)),\Delta)=1$, we claim that the codewords $\mathbf{x}$ and $\mathbf{y}$ are cyclically shift distinct if there exists $\delta\in \langle \alpha^{\frac{q-1}{\Delta}}\rangle\setminus\{1\}\subseteq\F_q^*$ such that $\mathbf{x}=\delta \cdot \mathbf{y}$.
The claim can be proved by showing that for a non-zero sequence $\overline{\A}\in G(h^*(x))$,
it is impossible for both states $\xi_i$ and $\xi_j$ of $\overline{\A}$
 such that $\xi_i=\delta\cdot\xi_j$ where $i\neq j$.
Assume that there are two states $\xi_i$ and $\xi_j$ of $\overline{\A}$ such that $\xi_i=\delta\cdot\xi_j$ for some $\delta\in\langle \alpha^{\frac{q-1}{\Delta}}\rangle\setminus\{1\}.$ As it can be seen from the recurrence relation,
$$\xi_i=\delta\cdot\xi_j=\delta^2\cdot\xi_{j+j-i}=\cdots =\delta^{\Delta+1}\cdot\xi_{j+\Delta(j-i)}=\delta\cdot\xi_{j
+\Delta(j-i)},$$
then ${\rm per}(h(x))|\Delta(j-i)$. Since $\gcd({\rm per}(h(x)),\Delta)=1$ and ${\rm per}(h(x))|(j-i)$, this leads to a contradiction. Therefore, the non-zero codewords $\x,\alpha^{\frac{q-1}{\Delta}}\x,(\alpha^{\frac{q-1}{\Delta}})^2\x,
\ldots,(\alpha^{\frac{q-1}{\Delta}})^{\Delta-1}\x$ are from distinct cycles. Since there are $\Delta$ cycles (we ignore the cycle corresponding to zero sequence), $C$ has one non-zero $b$-symbol weight.

We claim that for any non-zero codeword $\mathbf{c}=(c_0,c_1,\ldots,c_{n-1})\in C$, the $\mathbf{0}$'s run distribution of $\mathbf{c}$ is
\begin{equation*}
  \Psi(\mathbf{0}_i)=\left\{
                          \begin{array}{ll}
                            \frac{(q-1)^2}{\Delta}q^{k-2-i}, & \hbox{$1\leq i\leq k-2$;} \\
                            \frac{q-1}{\Delta}, & \hbox{$i=k-1$;} \\
                            0, & \hbox{$i\geq k$.}
                          \end{array}
                        \right.
\end{equation*}
There is no $\mathbf{0}$'s run with length greater than or equal to $k$ since $\overline{\A}$ is not a zero state. When $1\leq i\leq k-2$, we add $k-i-2$ coordinates $(b_1,b_2,\ldots,b_{k-i-2})$ behind $(a_1,0,\ldots,0,a_2)$, where $a_1,a_2\in \F_q^*$, $b_{j_1}\in\F_q, 1\leq j_1\leq k-i-2.$ There are $q^{k-i-2}(q-1)^2$ choices of $\{a_1,a_2,b_1,\ldots,b_{k-i-2}\}$. But there are only $\frac{q^{k-i-2}(q-1)^2}{\Delta}$ states which cover $\mathbf{0}_i$ can be obtained in the same non-zero sequence $\overline{\A}$. When $i=k-1$, let $\xi_1=(a,0,\ldots,0)$ be a state of a non-zero sequence $\overline{\A}$, where $a\in\F_q^*.$ According to the linear recursive relation, $\xi_2=(0,\ldots,0,b)$, where $b\in\F_q^*$ and $b$ is determined by $a$. Similarly, there are only $\frac{q-1}{\Delta}$ choices of $a$ such that $\xi_1$ is a state of $\overline{\A}.$

For any non-zero codeword $\mathbf{c}\in C$, it then follows from Theorem \ref{wbformula} that
\begin{eqnarray*}
  w_b(\mathbf{c}) &=& n-\sum\limits_{i=b}^{n-1}(i-b+1)\cdot \Psi(\mathbf{0}_i) \\
  ~ &=& \frac{q^k-1}{\Delta}-\frac{(q-1)^2}{\Delta}\cdot\sum\limits_{j_2=b+1}^{k-1}\sum\limits_{i_2=1}^{k-j_2}q^{i_2-1} -\frac{(q-1)(k-2)}{\Delta} \\
  ~ &=&  \frac{q-1}{\Delta}(q^{k-1}+\cdots+q^{k-b})=\frac{(q^b-1)q^{k}}{q^b\Delta}.
\end{eqnarray*}
This completes the proof.
\end{proof}
The conditions given by Lemma \ref{Lem1} are not necessary. Here we give a sufficient and necessary condition for the $b$-symbol weight of two vectors to be equal. To this end, we give a generalization of the support of a vector $\ccc$. We define the $b$-symbol support to be
$$\mathcal{I}_b(\ccc)=supp(\pi_b(\ccc))=\bigcup_{i=0}^{b-1} supp(\tau^i(\ccc)),$$
where $supp(\ccc)$ denotes the support of a vector $\ccc$ and $\tau$ denotes the left shift operator. It is easy to check that $w_b(\ccc)=|\mathcal{I}_b(\ccc)|.$
\begin{theorem}
Let $\x$ and $\mathbf{y}$ be two vectors that belong to $\F_q^n$, then $w_b(\x)=w_b(\mathbf{y})$ if and only if $|\mathcal{I}_b(\x)|=|\mathcal{I}_b(\mathbf{y})|.$
\end{theorem}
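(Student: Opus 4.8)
The plan is to reduce the statement to the identity $w_b(\ccc)=|\mathcal{I}_b(\ccc)|$ recorded just above it, and then to spell out a clean proof of that identity, which is where the (modest) work lies.

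First I would describe how the left shift interacts with supports: for $\ccc=(c_0,\ldots,c_{n-1})$ one has $supp(\tau^i(\ccc))=\{\,j\ (\mathrm{mod}\ n): c_{i+j}\neq 0\,\}=supp(\ccc)-i\pmod n$. Hence
$$\mathcal{I}_b(\ccc)=\bigcup_{i=0}^{b-1}supp(\tau^i(\ccc))=\bigcup_{i=0}^{b-1}\bigl(supp(\ccc)-i\bigr)\pmod n,$$
so that $j\in\mathcal{I}_b(\ccc)$ if and only if $c_{i+j}\neq 0$ for some $0\le i\le b-1$, i.e.\ if and only if the block $(c_j,c_{j+1},\ldots,c_{j+b-1})$ (indices taken mod $n$) is a nonzero vector of $\F_q^b$. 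Comparing this with the definition of $\pi_b$, whose $j$-th coordinate is exactly that block, shows that the $j$-th coordinate of $\pi_b(\ccc)$ is nonzero precisely when $j\in\mathcal{I}_b(\ccc)$; therefore $supp(\pi_b(\ccc))=\mathcal{I}_b(\ccc)$ and $w_b(\ccc)=w_H(\pi_b(\ccc))=|\mathcal{I}_b(\ccc)|$.

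Applying this identity to $\x$ and to $\mathbf{y}$ gives $w_b(\x)=|\mathcal{I}_b(\x)|$ and $w_b(\mathbf{y})=|\mathcal{I}_b(\mathbf{y})|$, and the claimed equivalence follows at once. The only point that needs care is the bookkeeping of the cyclic (mod $n$) indices in the definitions of $\tau$ and $\pi_b$; once the block description of $\mathcal{I}_b(\ccc)$ above is in place there is no real obstacle, the theorem being essentially a restatement of $w_b(\ccc)=|\mathcal{I}_b(\ccc)|$.
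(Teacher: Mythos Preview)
Your argument is correct and follows the same route as the paper: the theorem is an immediate consequence of the identity $w_b(\ccc)=|\mathcal{I}_b(\ccc)|$, which the paper states just before the theorem and whose proof (in the paper) is simply ``follows from the definition of $b$-symbol metric.'' Your write-up merely unpacks that one-line justification by tracking the cyclic indices in $\tau$ and $\pi_b$, so there is no substantive difference in approach.
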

\begin{proof}
The desired result follows from the definition of $b$-symbol metric.
\end{proof}
\subsection{The $r$-th generalized Hamming metric}
Let $C$ be an $[n,k]_q$ linear code.
 For any subcode $D\subset C$ , then the support of $D$ is defined to be
 \begin{equation*}
   \chi(D)=\{i:0\leq i\leq n-1|c_i\neq0~ \hbox{for some} ~ (c_0,c_1,\ldots,c_{n-1})\in D\}.
 \end{equation*}
The $r$-th generalized Hamming weight of a code $C$ is the smallest support of an $r$-dimensional subcode of $C$.
To avoid confusion of notation, we use $\dd_r(C)$ for $r$-th generalized Hamming distance of $C$.

The set $\{\dd_r(C):1\leq r\leq k\}$ is called the weight hierarchy of $C$.
To distinguish it from the later definition, let us call it the generalized weight hierarchy in the sequel.

\section{The connections among Hamming metric, $b$-symbol metric, and $r$-th generalized Hamming metric}
The goal of this section is to show the connections among Hamming metric, $b$-symbol metric and $r$-th generalized Hamming metric. 
\subsection{Hamming metric and $b$-symbol metric}
Let $G_b(\mathbf{c})$ be the generator matrix of the code generated by $\mathbf{c}$ and its first $b$ cyclic shifts.
\begin{equation}\label{gbc}
  G_b(\mathbf{c})=\left(
  \begin{array}{cccccc}
    c_0 & c_1 & c_2 & \cdots & c_{n-2} & c_{n-1} \\
    c_1 & c_2 & c_3 & \cdots & c_{n-1} & c_0 \\
    c_2 & c_3 & c_4 & \cdots & c_0 & c_1 \\
    \vdots & \vdots & \vdots & \vdots & \vdots & \vdots \\
    c_{b-1} & c_b & c_{b+1} & \cdots & c_{b-3} & c_{b-2} \\
  \end{array}
\right).
\end{equation}
The following result shows the connection between the $b$-symbol weight and Hamming weight for any vector. It is a generalization of Lemma 1 in \cite{Yaa}.
\begin{theorem}\label{relation}
Let $\mathbf{c}\in \F_q^n$ and denote by $V_b(\mathbf{c})$ the vectors generated by all linear combinations of $\ccc$ and its first $b-1$ cyclic shifts $($i.e., generated by all linear combinations of $G_b(\ccc))$. Then
$$w_b(\mathbf{c})=\frac{1}{q^{b-1}(q-1)}\sum_{\mathbf{c}^{\prime}\in V_b(\mathbf{c})}w_1(\mathbf{c}^{\prime}).$$
\end{theorem}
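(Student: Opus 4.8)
The plan is to compute the right-hand side by exchanging the order of summation: instead of summing the Hamming weights of all codewords $\mathbf{c}' \in V_b(\mathbf{c})$, I would sum over coordinate positions $j \in \{0,1,\dots,n-1\}$ and count, for each $j$, how many codewords $\mathbf{c}' \in V_b(\mathbf{c})$ are nonzero in position $j$. Thus
$$\sum_{\mathbf{c}' \in V_b(\mathbf{c})} w_1(\mathbf{c}') = \sum_{j=0}^{n-1} \left| \{ \mathbf{c}' \in V_b(\mathbf{c}) : c'_j \neq 0 \} \right|.$$
A generic $\mathbf{c}' \in V_b(\mathbf{c})$ has the form $\mathbf{c}' = \sum_{i=0}^{b-1} \lambda_i \tau^i(\mathbf{c})$ for $(\lambda_0,\dots,\lambda_{b-1}) \in \F_q^b$, so its $j$-th coordinate is the linear form $L_j(\lambda_0,\dots,\lambda_{b-1}) = \sum_{i=0}^{b-1} \lambda_i c_{j+i \bmod n}$ in the $\lambda_i$'s, with coefficient vector $(c_j, c_{j+1}, \dots, c_{j+b-1})$ (indices mod $n$) — precisely the $j$-th column of $G_b(\mathbf{c})$.

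The key dichotomy is then: if the column $(c_j,\dots,c_{j+b-1})$ is the zero vector, then $L_j \equiv 0$ and this position contributes $0$ to every codeword; if that column is nonzero, then $L_j$ is a nonzero linear functional on $\F_q^b$, hence its kernel has size $q^{b-1}$ and $L_j$ is nonzero on exactly $q^b - q^{b-1} = q^{b-1}(q-1)$ of the $q^b$ choices of $(\lambda_0,\dots,\lambda_{b-1})$. So each position $j$ with a nonzero column contributes exactly $q^{b-1}(q-1)$ to the total sum. The number of such positions is exactly $n$ minus the number of indices $j$ with $c_j = c_{j+1} = \cdots = c_{j+b-1} = 0$, which by the displayed formula for $w_b$ stated just before Theorem~\ref{wbformula} (equivalently, by $w_b(\mathbf{c}) = |\mathcal{I}_b(\mathbf{c})|$) equals $w_b(\mathbf{c})$. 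Therefore $\sum_{\mathbf{c}' \in V_b(\mathbf{c})} w_1(\mathbf{c}') = q^{b-1}(q-1)\, w_b(\mathbf{c})$, and dividing by $q^{b-1}(q-1)$ gives the claim.

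I do not anticipate a serious obstacle; the only points requiring care are the bookkeeping with cyclic indices (so that "column $j$ of $G_b(\mathbf{c})$" means $(c_j, c_{j+1 \bmod n}, \dots, c_{j+b-1 \bmod n})$, matching the $b$-symbol weight definition via $\pi_b$), and making explicit that $V_b(\mathbf{c})$ is being summed over \emph{with multiplicity} — i.e. the sum runs over all $q^b$ coefficient tuples $(\lambda_0,\dots,\lambda_{b-1})$, or equivalently over $V_b(\mathbf{c})$ as a multiset of size $q^b$ — since if $\mathbf{c}$ and its first $b-1$ shifts are linearly dependent, distinct tuples can yield the same vector. Given the normalizing constant $q^{b-1}(q-1)$ on the right, the intended reading is the sum over coefficient tuples, and the counting argument above works verbatim in that reading; I would state this convention explicitly at the start of the proof to avoid ambiguity.
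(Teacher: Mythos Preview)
Your proposal is correct and follows essentially the same approach as the paper: both arguments rest on the observation that the number of nonzero columns of $G_b(\mathbf{c})$ is exactly $w_b(\mathbf{c})$, and that each nonzero column contributes $q^{b-1}(q-1)$ to the total Hamming-weight sum over $V_b(\mathbf{c})$. Your write-up is more explicit than the paper's (you spell out the linear-functional kernel count and the multiset convention), but the underlying idea is identical.
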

\begin{proof}
The number of nonzero columns in the generator matrix of $V_b$ (its support size) is exactly $w_b(\mathbf{c})$. Hence, the sum of Hamming weight of codewords in $V_b(\mathbf{c})$ obey $$\sum_{\mathbf{c}^{\prime}\in V_b(\mathbf{c})} w_1(\mathbf{c}^{\prime})=q^{b-1}(q-1)w_b(\mathbf{c})$$ which proves the theorem.
\end{proof}
Notice that $|V_b(\ccc)|=q^b$ and                                                                                                                                                                                                                                                                                                                                                                                                                                                                                                                                                                                                                                                                                                                                                                                                                                                                                                                                                                                                                                                                                                                                                                                                                                                                                                                                                                                                                                                                                                                                                                                                                                                                                                                                                                                                                                                                                                                                                                                                                                                                                                                                                                                                                                                                                                                                                                                                                                                                                                                                                                                                                                                                                                                                                                                                                                                                                                                                                                                                                                                                                                                                                                                                                                                                                                                                           $V_b(\ccc)$ may be an multiset. For instance,
if we take $\ccc=(1010)\in\F_2^4$, then
\begin{eqnarray*}
  V_2(\ccc) &=& \{0000,1010,0101,1111\}; \\
  V_3(\ccc) &=& \{0000,1010,0101,1111,0000,1010,0101,1111\}.
\end{eqnarray*}
Observe that if the minimal polynomial of $\ccc=(c_0,c_1,\ldots,c_{n-1})$ has degree $\rho(\ccc)<b$, then $w_b(\ccc)=w_{\rho(\ccc)}(\ccc)$. This follows since $G_b(\ccc)$ contains the same row space as $G_{\rho(\ccc)}(\ccc)$ but each with multiplicity $q^{b-\rho(\ccc)}.$

In \cite{DTG,Yaa} the authors gave the connection between $w_b(\ccc)$ and $w_1(\ccc)$ for any vector $\ccc\in\F_q^n$ with $0<w_1(\ccc)\leq n-(b-1)$ as:
\begin{equation}\label{ineq}
  w_1(\ccc)+b-1\leq w_b(\ccc)\leq b\cdot w_1(\ccc).
\end{equation}
The following result generalizes Inequality (\ref{ineq}) to the general case and gives an interesting triangle inequality about the $b$-symbol metric. To this end, we need the following lemma.
\begin{lemma}\label{lemma5}
Let $\ccc$ be a vector that belongs to $\F_q^n$. Then
$$w_b(\ccc)\geq b\sum_{i=b}^{n-1}\Psi(\0_i).$$
\end{lemma}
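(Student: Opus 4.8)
The plan is to deduce the bound directly from the closed formula for $w_b(\ccc)$ in Theorem~\ref{wbformula}. Write $N_b=\sum_{i=b}^{n-1}\Psi(\0_i)$ for the number of maximal zero-runs of length at least $b$ on the circumferential vector $cir(\ccc)$. Since Theorem~\ref{wbformula} gives $w_b(\ccc)=n-\sum_{i=b}^{n-1}(i-b+1)\Psi(\0_i)$, the asserted inequality $w_b(\ccc)\ge bN_b$ is equivalent to the purely combinatorial statement
$$n\ \ge\ \sum_{i=b}^{n-1}\big((i-b+1)+b\big)\,\Psi(\0_i)\ =\ \sum_{i=b}^{n-1}(i+1)\,\Psi(\0_i),$$
so it suffices to bound $\sum_{i=b}^{n-1}(i+1)\,\Psi(\0_i)$ by $n$; in fact it is harmless to enlarge the sum to range over all $i\ge 1$.

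Next I would record two elementary facts about the run distribution of a fixed nonzero $\ccc$, both read off from $cir(\ccc)$. First, the maximal runs of zeros are pairwise disjoint and together occupy exactly the zero coordinates of $\ccc$, which gives $\sum_{i=1}^{n-1} i\,\Psi(\0_i)=n-w_1(\ccc)$. Second, on the circle each maximal zero-run is immediately preceded by a distinct nonzero coordinate, so the total number of maximal zero-runs satisfies $\sum_{i=1}^{n-1}\Psi(\0_i)\le w_1(\ccc)$. (Here one must be a little careful with the cyclic structure: two consecutive nonzero coordinates are ``separated'' by a run of length $0$, which is not counted by any $\Psi(\0_i)$ with $i\ge1$; this is precisely why the second count is an inequality and not an equality.)

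Combining the two facts yields
$$\sum_{i=b}^{n-1}(i+1)\,\Psi(\0_i)\ \le\ \sum_{i=1}^{n-1}(i+1)\,\Psi(\0_i)\ =\ \sum_{i=1}^{n-1} i\,\Psi(\0_i)+\sum_{i=1}^{n-1}\Psi(\0_i)\ \le\ \big(n-w_1(\ccc)\big)+w_1(\ccc)\ =\ n,$$
which is the desired inequality; the degenerate case $\ccc=\0$ is trivial, since then both sides of the claimed bound vanish. There is no real obstacle here beyond getting the bookkeeping right: the only subtlety is the correct treatment of length-$0$ gaps on the circle (and of the all-zero vector), which is exactly what forces one of the two counts to be an inequality. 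An alternative, more geometric route would be to argue with the $b$-symbol support $\mathcal{I}_b(\ccc)$ directly: to each maximal zero-run of length $\ge b$ assign its last $b-1$ (zero) positions together with the nonzero position immediately following it, verify that these $b$-element sets are pairwise disjoint across distinct long runs, and conclude $w_b(\ccc)=|\mathcal{I}_b(\ccc)|\ge bN_b$ — again the disjointness check being the only place requiring care.
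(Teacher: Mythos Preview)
Your main argument is correct and takes a genuinely different route from the paper. The paper gives a one-line combinatorial proof: each nonzero coordinate immediately preceding a zero-run of length $\ge b$ is contained in $b$ distinct nonzero $b$-tuples (those starting at positions $j-b+1,\ldots,j$, where $j$ is that coordinate), and these blocks of $b$ positions are pairwise disjoint across different long runs because consecutive such $j$'s are at least $b+1$ apart on the circle. This is precisely the ``more geometric route'' you sketch at the end, up to the cosmetic choice of attaching the nonzero coordinate \emph{after} the run rather than before it.

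Your primary argument instead reduces the claim, via Theorem~\ref{wbformula}, to the inequality $\sum_{i\ge1}(i+1)\Psi(\0_i)\le n$, which you obtain by splitting into $\sum_{i\ge 1} i\,\Psi(\0_i)=n-w_1(\ccc)$ and $\sum_{i\ge 1}\Psi(\0_i)\le w_1(\ccc)$, with the all-zero vector handled separately. This is clean and correct. The trade-off: the paper's proof is self-contained and does not rely on Theorem~\ref{wbformula}, while your approach makes the dependence on the full run distribution explicit and in fact yields the slightly stronger intermediate statement that the sum enlarged to all $i\ge 1$ (not just $i\ge b$) is already bounded by $n$.
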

\begin{proof}
The result follows since each nonzero element before the start of a run of $b$ $\0$'s or more belongs to $b$ nonzero $b$-tuples.
\end{proof}
\begin{theorem}\label{thm5}
$($Triangle inequality$)$ Let $\ccc\in\F_q^n$ be such that $0<w_b(\ccc)\leq n-m$ and $0< w_m(\ccc)\leq n-b$. Then we have
\begin{equation*}
  \max\{w_b(\ccc)+m,w_m(\ccc)+b\}\leq w_{b+m}(\ccc)\leq \min\{w_b(\ccc)+w_m(\ccc),n\}.
\end{equation*}
\end{theorem}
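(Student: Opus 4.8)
The plan is to bound $w_{b+m}(\ccc)$ both from below and from above, working with the $\0$'s run distribution $\Psi(\ccc)$ and the counting formula of Theorem~\ref{wbformula}. Write $w_j(\ccc) = n - \sum_{i=j}^{n-1}(i-j+1)\Psi(\0_i)$ for each window length $j$; the whole argument is bookkeeping on these sums, so the first step is to record the telescoping identity $w_j(\ccc) - w_{j+1}(\ccc) = \sum_{i=j}^{n-1}\Psi(\0_i)$, the count of runs of zeros of length at least $j$. I denote $N_{\geq j} := \sum_{i=j}^{n-1}\Psi(\0_i)$; these are nonincreasing in $j$.

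For the upper bound $w_{b+m}(\ccc)\leq w_b(\ccc)+w_m(\ccc)$, I would argue via $b$-symbol supports: $\mathcal I_{b+m}(\ccc) = \bigcup_{i=0}^{b+m-1}\mathrm{supp}(\tau^i\ccc) \subseteq \mathcal I_b(\ccc)\cup\tau^b(\mathcal I_m(\ccc))$, where $\tau^b(\mathcal I_m(\ccc))$ has the same size as $\mathcal I_m(\ccc)$; taking cardinalities and using $w_j(\ccc)=|\mathcal I_j(\ccc)|$ gives the bound, and the trivial cap by $n$ is automatic. The lower bound $w_{b+m}(\ccc)\geq w_b(\ccc)+m$ is the crux. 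Rewriting with the telescoping identity, $w_{b+m}(\ccc) - w_b(\ccc) = \sum_{j=b}^{b+m-1}\bigl(w_{j+1}(\ccc)-w_j(\ccc)\bigr) \cdot(-1)$... more usefully, $w_{b+m}(\ccc) = w_b(\ccc) - \sum_{j=b}^{b+m-1} N_{\geq j+1}$ — wait, I want the sign the other way: increasing the window can only \emph{decrease} weight, so I instead need that this decrease is at most $w_b(\ccc)-m$, i.e. $\sum_{j=b}^{b+m-1} N_{\geq j+1} \leq w_b(\ccc)-m$. By monotonicity $\sum_{j=b}^{b+m-1}N_{\geq j+1}\leq m\,N_{\geq b+1}\leq m\,N_{\geq b}$, so it suffices to show $m\,N_{\geq b} + m \leq w_b(\ccc)$, equivalently $m(N_{\geq b}+1)\leq w_b(\ccc)$. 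Here Lemma~\ref{lemma5} gives $w_b(\ccc)\geq b\,N_{\geq b}$, and the hypothesis $w_b(\ccc)\leq n-m$ together with $w_b(\ccc)>0$ must be leveraged to upgrade this — the point being that if $N_{\geq b}=0$ then $w_b(\ccc)=w_1(\ccc)+\text{(something)}$ and one checks $w_b(\ccc)\geq b\geq m$ is false in general, so the correct estimate pairs a run of length $\geq b$ with its "cost" more carefully; the symmetric bound $w_{b+m}(\ccc)\geq w_m(\ccc)+b$ follows by swapping $b\leftrightarrow m$.

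The main obstacle I anticipate is the lower bound in the boundary/degenerate regime: when $\ccc$ has long runs of zeros (large $\Psi(\0_i)$ for $i$ near $n$), the naive monotonicity estimate $N_{\geq j+1}\leq N_{\geq b}$ is too lossy, and one must instead exploit that a single run of length $i$ contributes $i-j+1$ to $w_j$'s deficit but this quantity drops by exactly one per unit increase of $j$, so over the window $j\in\{b,\dots,b+m-1\}$ a run of length $i\geq b+m-1$ contributes a total decrease of exactly $m$ while shorter runs contribute less — this should give $w_b(\ccc)-w_{b+m}(\ccc)\leq m\cdot(\#\{\text{runs of length}\geq b\})$, and then the condition $w_b(\ccc)\leq n-m$ is what prevents a single dominating run from ruining the inequality, since such a run already forces $w_b(\ccc)$ to be small. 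I would handle this by splitting into the case $N_{\geq b}=0$ (where $w_{b+m}(\ccc)=w_b(\ccc)$ and one needs $m\leq 0$... so actually this case needs $w_m(\ccc)\leq n-b$ to force $N_{\geq m}\neq 0$, interlocking the two hypotheses) versus $N_{\geq b}\geq 1$, and in the latter case a length-$\geq b$ run of zeros sits between two nonzero entries, each of which lies in $b$ nonzero $b$-tuples and in $b+m$ nonzero $(b+m)$-tuples, feeding a direct injection argument. Getting this case analysis clean, with the two hypotheses $w_b(\ccc)\leq n-m$ and $w_m(\ccc)\leq n-b$ each pulling their weight, is where the real work lies; once that is done, Inequality~(\ref{ineq}) drops out as the special case $m=1$.
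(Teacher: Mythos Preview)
Your upper-bound argument via supports is correct and in fact neater than the paper's: the containment $\mathcal I_{b+m}(\ccc)=\bigcup_{i=0}^{b+m-1}\mathrm{supp}(\tau^i\ccc)\subseteq \mathcal I_b(\ccc)\cup \tau^{-b}\bigl(\mathcal I_m(\ccc)\bigr)$ gives $w_{b+m}(\ccc)\le w_b(\ccc)+w_m(\ccc)$ in one line. The paper instead computes $w_b(\ccc)+w_m(\ccc)-w_{b+m}(\ccc)$ from the run formulas and then invokes Lemma~\ref{lemma5} to show it is nonnegative; your route avoids that lemma entirely.

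The lower-bound plan, however, is built on a sign error. You write ``increasing the window can only \emph{decrease} weight'' and then try to bound a ``decrease''---but $b\mapsto w_b(\ccc)$ is \emph{nondecreasing}: from Theorem~\ref{wbformula} one has
\[
w_{j+1}(\ccc)-w_j(\ccc)=\sum_{i\ge j}\Psi(\0_i)=N_{\ge j}\ge 0,
\]
not $w_j-w_{j+1}=N_{\ge j}$ as you record. Consequently your displayed identity $w_{b+m}(\ccc)=w_b(\ccc)-\sum_{j=b}^{b+m-1}N_{\ge j+1}$ has the wrong sign, and the subsequent goal ``$\sum N_{\ge j+1}\le w_b(\ccc)-m$'' is aiming at the wrong target. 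With the correct sign the task is simply to show $\sum_{j=b}^{b+m-1}N_{\ge j}\ge m$, and no delicate interplay of the two hypotheses is needed.

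The paper's argument for this is a two-line case split that you almost reach but misidentify: split on whether $N_{\ge b+m}:=\sum_{i\ge b+m}\Psi(\0_i)$ vanishes, not on $N_{\ge b}$. If $N_{\ge b+m}\ge 1$ then the telescoped sum rewrites as
\[
w_{b+m}(\ccc)-w_b(\ccc)=\sum_{i=b}^{b+m-1}(i-b+1)\Psi(\0_i)+m\,N_{\ge b+m}\ge m.
\]
If $N_{\ge b+m}=0$ then $w_{b+m}(\ccc)=n$, and the hypothesis $w_b(\ccc)\le n-m$ gives $w_{b+m}(\ccc)-w_b(\ccc)\ge m$ directly. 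The companion inequality $w_{b+m}(\ccc)\ge w_m(\ccc)+b$ follows by symmetry. So the hypotheses $w_b(\ccc)\le n-m$ and $w_m(\ccc)\le n-b$ are used only in this ``no long run'' case, and there is no need for the intricate injection/boundary analysis you sketch in the final paragraph.
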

\begin{proof}
Let $\{\Psi(\mathbf{0}_1),\Psi(\mathbf{0}_2),\ldots,\Psi(\mathbf{0}_{n})\}$ be the $\mathbf{0}$'s run distribution of $\ccc.$
According to Theorem \ref{wbformula}, we have
\begin{equation}\label{A1}
  w_b(\ccc) = n-\sum_{i=b}^{n-1}(i-b+1)\cdot\Psi(\mathbf{0}_i);
\end{equation}
\begin{equation}\label{A2}
w_{m}(\ccc) =n-\sum_{i=m}^{n-1}(i-m+1)\cdot\Psi(\mathbf{0}_i);
\end{equation}
\begin{equation}\label{A3}
 ~~~~~ w_{b+m}(\ccc) = n-\sum_{i=b+m}^{n-1}(i-b-m+1)\cdot\Psi(\mathbf{0}_i).
\end{equation}
\begin{itemize}
  \item If $\sum_{i=b+m}^{n-1}\Psi(\mathbf{0}_i)=0$, then $w_{b+m}(\ccc)=n\geq w_b(\ccc)+m$;
  \item If $\sum_{i=b+m}^{n-1}\Psi(\mathbf{0}_i)\geq1$, then
  \begin{eqnarray*}
             w_{b+m}(\ccc)-w_b(\ccc) &=& \sum_{i=b}^{b+m-1}(i-b+1)\cdot\Psi(\mathbf{0}_i)+
             m\sum_{i=b+m}^{n-1}\Psi(\mathbf{0}_i)\geq m.
           \end{eqnarray*}
\end{itemize}
Therefore, we complete the proof of the inequality on the left-hand side.

Without loss of generality, assume that $b\leq m$.
In the light of Equations (\ref{A1}), (\ref{A2}) and (\ref{A3}), we obtain
\begin{eqnarray*}
  w_b(\ccc)+w_m(\ccc)-w_{b+m}(\ccc) &=& w_b(\ccc)+n-\sum_{i=m}
  ^{n-1}(i-m+1)\cdot\Psi(\mathbf{0}_i)\\
   ~&~&- n+\sum_{i=b+m}^{n-1}(i-b-m+1)\cdot\Psi(\mathbf{0}_i)\\
  ~ &=& w_b(\ccc)
  -b\sum_{i=m+b}^{n-1}\Psi(\mathbf{0}_i)-\sum_{i=m}^{m+b-1}
  (i-m+1)\cdot\Psi(\mathbf{0}_i)\\
  ~ &\geq&0~~(\hbox{by Lemma \ref{lemma5} and the assumption $b\leq m$}).
\end{eqnarray*}
Therefore, we get the desired results.
\end{proof}

\begin{remark}
According to Proposition 2.3 in \cite{DTG}, we have $w_{b+1}(\ccc)\geq w_b(\ccc)+1$ if $0<w_b(\ccc)<n$. This is a simpler way to prove the left-hand side of the above inequality.
\end{remark}

From Theorem \ref{thm5}, we have the following proposition.
\begin{proposition}
Let $\ccc\in\F_q^n$ and $0<w_b(\ccc)<n$. Then we have
\begin{equation*}
  \max\left\{w_{k_1}(\ccc)+b-k_1,\ldots,w_{k_t}(\ccc)+b-k_t\right\}
  \leq w_b(\ccc)\leq\min\left\{\sum_{i=1}^tw_{k_i}(\ccc),n\right\},
\end{equation*}
where $b=\sum_{i=1}^tk_i.$
\end{proposition}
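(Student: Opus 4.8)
The plan is to derive this proposition by iterating the triangle inequality of Theorem~\ref{thm5}, using induction on $t$. The base case $t=1$ is vacuous, and $t=2$ is precisely Theorem~\ref{thm5} after renaming $k_1=b$, $k_2=m$: it gives $\max\{w_{k_1}(\ccc)+k_2,\,w_{k_2}(\ccc)+k_1\}\le w_{k_1+k_2}(\ccc)\le\min\{w_{k_1}(\ccc)+w_{k_2}(\ccc),n\}$, and since $k_1+k_2=b$ we have $k_2=b-k_1$ and $k_1=b-k_2$, which is exactly the claimed two-term bound.

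For the inductive step, I would write $b=\sum_{i=1}^{t}k_i = b' + k_t$ where $b'=\sum_{i=1}^{t-1}k_i$, apply Theorem~\ref{thm5} with the pair $(b',k_t)$, and then apply the induction hypothesis to $w_{b'}(\ccc)$. For the upper bound this is immediate: $w_b(\ccc)\le\min\{w_{b'}(\ccc)+w_{k_t}(\ccc),n\}\le\min\{\sum_{i=1}^{t-1}w_{k_i}(\ccc)+w_{k_t}(\ccc),n\}=\min\{\sum_{i=1}^{t}w_{k_i}(\ccc),n\}$. For the lower bound, the ``new'' term is handled directly: Theorem~\ref{thm5} gives $w_b(\ccc)\ge w_{k_t}(\ccc)+b'=w_{k_t}(\ccc)+b-k_t$. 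For each $j<t$, Theorem~\ref{thm5} gives $w_b(\ccc)\ge w_{b'}(\ccc)+k_t$, and the induction hypothesis gives $w_{b'}(\ccc)\ge w_{k_j}(\ccc)+b'-k_j$; adding $k_t$ yields $w_b(\ccc)\ge w_{k_j}(\ccc)+b'-k_j+k_t = w_{k_j}(\ccc)+b-k_j$ since $b'+k_t=b$. Taking the maximum over all $j\in\{1,\ldots,t\}$ gives the stated left-hand bound.

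The main technical obstacle is that Theorem~\ref{thm5} is not unconditional: to apply it to the pair $(b',k_t)$ one needs $0<w_{b'}(\ccc)\le n-k_t$ and $0<w_{k_t}(\ccc)\le n-b'$, and to run the induction one needs $0<w_{b'}(\ccc)<n$ as well. I would discharge these hypotheses using the monotonicity fact recorded in the Remark after Theorem~\ref{thm5} (from \cite{DTG}): since $0<w_b(\ccc)<n$ and $w_j(\ccc)$ is nondecreasing in $j$ with $w_{j+1}(\ccc)\ge w_j(\ccc)+1$ whenever $0<w_j(\ccc)<n$, every $w_{k_i}(\ccc)$ and every partial value $w_{b'}(\ccc)$ with $b'\le b$ satisfies $0<w_{b'}(\ccc)\le w_b(\ccc)<n$, so in particular $w_{b'}(\ccc)<n\le n-k_t+1$; strict inequality $w_{b'}(\ccc)<n$ forces $w_{b'}(\ccc)\le n-1$, and similarly $w_{k_t}(\ccc)\le w_b(\ccc)<n$, but one must also check $w_{b'}(\ccc)\le n-k_t$ and $w_{k_t}(\ccc)\le n-b'$. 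If these last two fail in a degenerate regime, the upper bound in the triangle inequality still applies through the $\min$ with $n$, and the lower bound $w_b(\ccc)\ge w_{b'}(\ccc)+k_t$ can be recovered directly from Equations~(\ref{A1}) and~(\ref{A3}) (the computation in the proof of Theorem~\ref{thm5} showing $w_{b+m}(\ccc)-w_b(\ccc)\ge m$ does not actually use the upper constraint on the weights). So I would phrase the argument so that it only invokes the parts of Theorem~\ref{thm5} that hold under $0<w_b(\ccc)<n$ alone, which suffices for the proposition as stated.
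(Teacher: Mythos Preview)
Your approach is correct and is exactly what the paper intends: it states only ``From Theorem~\ref{thm5}, we have the following proposition'' and gives no further argument, so the induction on $t$ via repeated application of Theorem~\ref{thm5} is the implied proof. One small simplification: your worry about the side conditions is unnecessary, since the increment property $w_{j+1}(\ccc)\ge w_j(\ccc)+1$ (valid for all $j<b$ because $0<w_j(\ccc)\le w_b(\ccc)<n$) iterates to give $w_{b'}(\ccc)\le w_b(\ccc)-k_t<n-k_t$ and symmetrically $w_{k_t}(\ccc)<n-b'$, so the hypotheses of Theorem~\ref{thm5} are always met at every step.
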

\subsection{$b$-symbol metric and $r$-th generalized Hamming metric}
Let $G=[s_0,s_1,\cdots,s_{n-1}]$ be a generator matrix of an $[n,k]$ code $C$ with columns $s_i$, $i=0,1,\ldots,n-1$.
Let $U$ be a subspace of $\F_2^k$, $m(U)=\{i|s_i\in U\}$  and
$$F_{k,l}=\{U|\dim(U)=l\}.$$

\begin{lemma}{\rm\cite{HKY}}\label{Torlem1}
Let $C$ be an $[n,k]$ linear code over $\F_q$. Then
$$\dd_r(C)=n-\max\{m(U)|U\in F_{k,k-r}\}.$$
\end{lemma}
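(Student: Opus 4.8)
The plan is to recover the characterization of $\dd_r(C)$ in terms of the subspace $U$ directly from the definition of the $r$-th generalized Hamming weight. Recall that $\dd_r(C) = \min\{|\chi(D)| : D \subseteq C,\ \dim D = r\}$, so the strategy is to show that for a fixed $r$-dimensional subcode $D$ one has $|\chi(D)| = n - m(U)$ for an appropriate $(k-r)$-dimensional subspace $U$ of $\F_q^k$, and conversely. First I would fix an isomorphism $\F_q^k \cong C$ induced by the generator matrix $G = [s_0, s_1, \ldots, s_{n-1}]$: a message $\mathbf{u} \in \F_q^k$ maps to the codeword $\mathbf{u}G$, whose $i$-th coordinate is $\langle \mathbf{u}, s_i \rangle$. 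Under this identification a subcode $D$ of dimension $r$ corresponds to a subspace $W \subseteq \F_q^k$ of dimension $r$, and the $i$-th coordinate vanishes on all of $D$ precisely when $s_i \in W^{\perp}$, where $W^{\perp}$ has dimension $k - r$.

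Next I would translate the support condition. A coordinate $i$ lies in $\chi(D)$ iff there exists a codeword in $D$ with nonzero $i$-th coordinate, i.e. iff $s_i \notin W^{\perp}$. Hence $i \notin \chi(D)$ iff $s_i \in W^{\perp}$, so $n - |\chi(D)| = |\{i : s_i \in W^{\perp}\}| = m(W^{\perp})$. Setting $U = W^{\perp}$, which ranges over all of $F_{k,k-r}$ as $W$ ranges over all $r$-dimensional subspaces (duality is a bijection between $r$-dimensional and $(k-r)$-dimensional subspaces), I get $|\chi(D)| = n - m(U)$. Therefore
$$\dd_r(C) = \min_{\dim D = r} |\chi(D)| = \min_{U \in F_{k,k-r}} \bigl(n - m(U)\bigr) = n - \max\{m(U) : U \in F_{k,k-r}\},$$
which is the claimed identity.

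There is one subtlety I should be careful with: the generator matrix $G$ might have repeated or zero columns (the code need not be projective), but nothing above uses that the columns are distinct — $m(U)$ is defined as the cardinality of the index set $\{i : s_i \in U\}$, counting multiplicity, which is exactly what matches $n - |\chi(D)|$. Also one should note $G$ has full row rank $k$, so the map $\mathbf{u} \mapsto \mathbf{u}G$ is injective and the correspondence between subcodes $D \subseteq C$ and subspaces $W \subseteq \F_q^k$ is genuinely a bijection preserving dimension. I expect the main (though still routine) point to be spelling out cleanly that $s_i \in W^\perp$ is equivalent to the $i$-th coordinate being identically zero on $D$; everything else is bookkeeping with the duality bijection. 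Since this lemma is quoted from \cite{HKY}, the proof in the paper is presumably just a one- or two-line pointer to this argument.
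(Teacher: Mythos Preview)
Your argument is correct and is the standard proof of this identity. Note, however, that the paper does not actually supply a proof of this lemma: it is simply quoted from \cite{HKY} without argument, so there is no ``paper's own proof'' to compare against. Your write-up is exactly the kind of routine duality computation one would expect behind the citation.
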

\begin{theorem}\label{thm9}
Let $C$ be an $[n,k]$ linear code over $\F_q$. Then
\begin{eqnarray}
  \dd_r(C) &=& \min\left\{\left.\frac{1}{q^{r-1}(q-1)}\sum_{\ccc\in R}w_1(\ccc)\right| R \hbox{ is an $r$-dimensional subspace of $C$}
  \right\}; \\
  d_b(C) &=& \min\left\{\left.\frac{1}{q^{b-1}(q-1)}\sum_{\ccc^{\prime}\in V_b(\ccc)}w_1(\ccc^{\prime})\right|\ccc\in C\right\}.
\end{eqnarray}
\end{theorem}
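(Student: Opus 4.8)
The plan is to prove the two identities separately, each by combining the relevant intrinsic characterization of the weight with the averaging identity of Theorem~\ref{relation}.

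For the first identity, I would start from the definition of $\dd_r(C)$ as the smallest support size $|\chi(R)|$ over $r$-dimensional subcodes $R\subseteq C$. The key observation is that for a fixed $r$-dimensional subspace $R$ with generator matrix $M$, the support $\chi(R)$ is exactly the set of nonzero columns of $M$, so $|\chi(R)|$ equals the number of nonzero columns. Now I count, for each column position $i$, how many codewords of $R$ have a nonzero entry there: if column $i$ is the zero column it contributes $0$, and if it is nonzero then exactly $q^{r-1}(q-1)$ of the $q^r$ codewords in $R$ are nonzero in that coordinate (the linear functional $\ccc\mapsto c_i$ on $R$ is either identically zero or surjective onto $\F_q$, and in the surjective case each fiber has size $q^{r-1}$). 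Summing over all coordinates,
\[
\sum_{\ccc\in R}w_1(\ccc)=q^{r-1}(q-1)\,|\chi(R)|,
\]
so $|\chi(R)|=\frac{1}{q^{r-1}(q-1)}\sum_{\ccc\in R}w_1(\ccc)$; taking the minimum over all $r$-dimensional $R$ gives the claimed formula. This reproves the content of Lemma~\ref{Torlem1} from the dual side, but the column-counting argument is what I actually need.

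For the second identity, the work is already done: by definition $d_b(C)=\min\{w_b(\ccc)\mid \ccc\in C\setminus\{\0\}\}$ (and the $\ccc=\0$ term contributes $0$, so including it harmlessly in the minimization over $\ccc\in C$ only matters if $C=\{\0\}$, which I would note or tacitly exclude), and Theorem~\ref{relation} gives exactly $w_b(\ccc)=\frac{1}{q^{b-1}(q-1)}\sum_{\ccc'\in V_b(\ccc)}w_1(\ccc')$. Substituting termwise yields the stated equality. I should remark that $V_b(\ccc)$ is taken as a multiset (as the paper already emphasizes with the $\ccc=(1010)$ example), so the inner sum is a sum with multiplicities; this is consistent because Theorem~\ref{relation} was proved with that convention.

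I do not expect a genuine obstacle here — both halves are essentially bookkeeping on top of results already established. The only point requiring a little care is making the two formulas visibly parallel: in the first, the minimization ranges over $r$-dimensional \emph{subspaces} and the averaging constant is $q^{r-1}(q-1)$ tied to that dimension; in the second, the minimization ranges over single codewords $\ccc$ but the averaging is over the (multi)set $V_b(\ccc)$ of size $q^b$ with constant $q^{b-1}(q-1)$. The mild subtlety, worth a sentence, is that $V_b(\ccc)$ need not be an $r$-dimensional space on the nose — its dimension is $\rho(\ccc)=\min(\rho(\ccc),b)$ — but the averaging identity of Theorem~\ref{relation} still holds because the row space of $G_b(\ccc)$ is covered with uniform multiplicity $q^{b-\rho(\ccc)}$, which is precisely the remark made just after Theorem~\ref{relation}. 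So the proof is: invoke the column-counting identity for (1), invoke Theorem~\ref{relation} for (2), and in each case take the minimum.
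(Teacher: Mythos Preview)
Your proposal is correct and matches the paper's approach: for (8) you invoke Theorem~\ref{relation} exactly as the paper does, and for (7) you spell out the column-counting identity $\sum_{\ccc\in R}w_1(\ccc)=q^{r-1}(q-1)\,|\chi(R)|$ directly, whereas the paper simply cites the definition together with Lemma~\ref{Torlem1} (from \cite{HKY}) for the same fact. The underlying argument is the same averaging-plus-minimization in both places; your version is just more explicit.
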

\begin{proof}
The desired result (7) follows from the definition of the $r$-th generalized Hamming weight and Lemma \ref{Torlem1}. The desired result (8) follows from Theorem \ref{relation}.
\end{proof}


In the case of $b$-symbol weight, we need to compute the summation of the Hamming weights of vectors in $V_b(\ccc)$ for any $\ccc$, 
while for the $r$-th generalized Hamming weight, this summation needs to be calculated for all $r$-dimensional subcodes.

The next goal is to give a very interesting connection between $d_b(C)$ and $\dd_r(C)$ if $C$ is cyclic. To this end, we need the following lemmas.
\begin{lemma}\label{wbdr}
Let $C$ be a cyclic code over $\F_q$ with parameters $[n,k]$ and $\ccc$ be a codeword of $C$. Let $b$ be a positive integer which not greater than $k$. If $G_b(\ccc)$ has rank $\rho(\ccc)$, then
$$w_b(\ccc)\geq \dd_{\rho(\ccc)}(C).$$
\end{lemma}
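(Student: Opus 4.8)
The plan is to use cyclicity to recognize $V_b(\ccc)$ as an $r$-dimensional subcode of $C$ with $r=\rho(\ccc)$, and then to combine Theorem~\ref{relation} with the averaging description of $\dd_r$ given in Theorem~\ref{thm9}(7). First I would observe that since $C$ is cyclic, every cyclic shift $\tau^i(\ccc)$ lies in $C$; hence every row of $G_b(\ccc)$ is a codeword, and the row space of $G_b(\ccc)$, which is exactly the set underlying the multiset $V_b(\ccc)$, is a linear subspace $R\subseteq C$. By hypothesis this row space has dimension $\rho(\ccc)=:r$, and since $b\le k$ and $r\le b$ we get $r\le k$, so $\dd_r(C)$ is defined.

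Next I would account for the multiplicities in $V_b(\ccc)$. The linear map $\F_q^b\to R$ that sends a coefficient vector to the corresponding combination of the rows of $G_b(\ccc)$ is onto and has kernel of dimension $b-r$, so each vector of $R$ occurs exactly $q^{b-r}$ times in $V_b(\ccc)$ (this is the multiplicity phenomenon already noted after Theorem~\ref{relation}). Therefore $\sum_{\ccc'\in V_b(\ccc)}w_1(\ccc')=q^{b-r}\sum_{\ccc'\in R}w_1(\ccc')$, and substituting into Theorem~\ref{relation} yields
\[
w_b(\ccc)=\frac{1}{q^{b-1}(q-1)}\sum_{\ccc'\in V_b(\ccc)}w_1(\ccc')=\frac{q^{b-r}}{q^{b-1}(q-1)}\sum_{\ccc'\in R}w_1(\ccc')=\frac{1}{q^{r-1}(q-1)}\sum_{\ccc'\in R}w_1(\ccc').
\]

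Finally, since $R$ is an $r$-dimensional subspace of $C$, the last expression is one of the quantities over which the minimum defining $\dd_r(C)$ in Theorem~\ref{thm9}(7) ranges, so it is at least $\dd_r(C)=\dd_{\rho(\ccc)}(C)$, giving $w_b(\ccc)\ge\dd_{\rho(\ccc)}(C)$ as claimed. I do not anticipate a genuine obstacle: the only points needing care are that cyclicity is exactly what makes $V_b(\ccc)\subseteq C$, and the bookkeeping of the $q^{b-\rho(\ccc)}$ multiplicity, both of which are routine once stated correctly.
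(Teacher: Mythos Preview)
Your proof is correct and follows essentially the same route as the paper: recognize the row space of $G_b(\ccc)$ as a $\rho(\ccc)$-dimensional subcode of $C$ by cyclicity, account for the $q^{b-\rho(\ccc)}$ multiplicity in $V_b(\ccc)$, apply Theorem~\ref{relation}, and finish via the averaging description of $\dd_r$ in Theorem~\ref{thm9}(7). The paper's proof is slightly terser (it does not name Theorem~\ref{thm9} explicitly at the final inequality), but the logic is identical.
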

\begin{proof}
If $C$ is cyclic and $G_b(\ccc)$ has rank $\rho(\ccc)$, then the code $C^{\prime}$ generated by $G_b(\ccc)$ is a subcode of $C$ and $C^{\prime}$ is a $\rho(\ccc)$-dimensional subspace of $C$. By Theorem \ref{relation}, we have
\begin{eqnarray*}
  w_b(\ccc) &=& \frac{1}{q^{b-1}(q-1)}\sum_{\ccc^{\prime}\in V_b(\ccc)}
w_1(\ccc^{\prime}) \\
  ~         &=& \frac{q^{b-\rho(\ccc)}}{q^{b-1}(q-1)}\sum_{\ccc^{\prime}\in C^{\prime}}w_1(\ccc^{\prime})\\
  ~         &\geq& \dd_{\rho(\ccc)}(C).
\end{eqnarray*}
Therefore, we obtain the desired result.
\end{proof}
\begin{remark}
The reason why we need the restriction $b\leq k$ is to ensure $\rho(\ccc)\leq k$. From the definition of $r$-th generalized Hamming weight, $1\leq r\leq k$ and $C$ has to be linear. The definition of the $b$-symbol weight does not restrict $b\leq k$, but if $C$ is cyclic, we only consider the case $b\leq k$, for reasons that we will explain in Theorem \ref{Thm9} and Remark \ref{re23}.
\end{remark}
Assume that $\ccc$ is a codeword of $C$.
Let $\vartheta(\ccc)$ be the maximum $\mathbf{0}$'s run length of $cir(\ccc)$ and $\theta=\max\{\vartheta(\ccc)|\ccc\in C\backslash\{\mathbf{0}\}\}$. The parameter $\theta$ is called the maximum $\mathbf{0}$'s run length of $C$.
\begin{lemma}\label{new7}
Let $C$ be a linear code over $\F_q$ with minimum $b$-symbol weight $d_b(C)<n$. Let
$\overline{C}=\{\ccc|\ccc\in C \hbox{~and~} w_b(\ccc)=d_b(C)\}.$ Then $Rank(G_b(\ccc))=b$ for any $\ccc\in \overline{C}$.
\end{lemma}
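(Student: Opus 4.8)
The plan is to argue by contradiction. Suppose $\ccc\in\overline{C}$ but $Rank(G_b(\ccc))=\rho<b$; I will produce a codeword-level identity that violates the strict monotonicity $w_{\ell+1}(\ccc)\ge w_\ell(\ccc)+1$ (valid whenever $0<w_\ell(\ccc)<n$, as recorded in the Remark following Theorem~\ref{thm5} via \cite{DTG}). First note that since $d_b(C)>0$ we have $\ccc\neq\0$, so the rows $\ccc,\tau(\ccc),\ldots,\tau^{b-1}(\ccc)$ of $G_b(\ccc)$ are not all zero, i.e.\ $\rho\ge 1$. Rank deficiency $\rho<b$ yields a nontrivial relation $\sum_{i=0}^{\ell}\lambda_i\tau^i(\ccc)=\0$ with $\lambda_\ell\neq0$ and $1\le\ell\le b-1$ (the largest index with nonzero coefficient is at least $1$, for otherwise $\lambda_0\ccc=\0$ forces $\ccc=\0$). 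Rearranging gives $\tau^\ell(\ccc)=-\lambda_\ell^{-1}\sum_{i=0}^{\ell-1}\lambda_i\tau^i(\ccc)$, whence the support containment $\mathrm{supp}(\tau^\ell(\ccc))\subseteq\bigcup_{i=0}^{\ell-1}\mathrm{supp}(\tau^i(\ccc))$, since the support of a linear combination lies in the union of the supports of its terms.

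Next I would propagate this containment under further shifts. Using $\mathrm{supp}(\tau(\dd))=\mathrm{supp}(\dd)-1\pmod n$ for any $\dd\in\F_q^n$, an easy induction on $j$ shows $\mathrm{supp}(\tau^{j}(\ccc))\subseteq\bigcup_{i=0}^{\ell-1}\mathrm{supp}(\tau^i(\ccc))$ for every $j\ge\ell$: shifting the inclusion for index $j$ gives an inclusion into $\bigcup_{i=1}^{\ell}\mathrm{supp}(\tau^i(\ccc))$, and the term with index $\ell$ is reabsorbed by the base case. Since $\ell+1\le b\le n$, it follows that
$$\mathcal{I}_{\ell+1}(\ccc)=\bigcup_{i=0}^{\ell}\mathrm{supp}(\tau^i(\ccc))=\bigcup_{i=0}^{\ell-1}\mathrm{supp}(\tau^i(\ccc))=\mathcal{I}_{\ell}(\ccc),$$
so that $w_{\ell+1}(\ccc)=|\mathcal{I}_{\ell+1}(\ccc)|=|\mathcal{I}_{\ell}(\ccc)|=w_{\ell}(\ccc)$. (In effect this reproves the observation following Theorem~\ref{relation}, that a low-rank $G_b(\ccc)$ stabilizes the $b$-symbol weight.)

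To conclude, I would verify the two size bounds needed to apply monotonicity at level $\ell$. Since $\ccc\neq\0$ we have $w_\ell(\ccc)>0$; and since every cyclic run of $b$ consecutive zeros in $cir(\ccc)$ is in particular a run of $\ell$ consecutive zeros (as $\ell\le b$), the number of starting positions of $b$-runs is at most that of $\ell$-runs, giving $w_\ell(\ccc)\le w_b(\ccc)=d_b(C)<n$. Thus $0<w_\ell(\ccc)<n$, so $w_{\ell+1}(\ccc)\ge w_\ell(\ccc)+1$, contradicting $w_{\ell+1}(\ccc)=w_\ell(\ccc)$. Hence no such relation exists and $Rank(G_b(\ccc))=b$.

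The only genuinely delicate point is the bookkeeping in the second paragraph: tracking cyclic shifts of supports modulo $n$ and checking that the union over $i=0,\ldots,b-1$ collapses to the union over $i=0,\ldots,\ell-1$. Everything else is either a one-line linear-algebra remark or a direct appeal to the monotonicity already recorded after Theorem~\ref{thm5}. Note that this argument nowhere uses cyclicity of $C$, in agreement with the statement of the lemma.
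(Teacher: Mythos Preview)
Your proof is correct. It differs from the paper's argument, which is more direct: the paper first observes that any $\ccc\in\overline{C}$ must contain a cyclic run of at least $b$ consecutive zeros (otherwise Theorem~\ref{wbformula} forces $w_b(\ccc)=n$, contradicting $d_b(C)<n$), and then simply exhibits an explicit $b\times(\vartheta(\ccc)+2)$ submatrix of $G_b(\ccc)$, built around that long zero run, which is visibly anti-triangular with nonzero entries $\alpha,\beta,\ldots,\beta$ and hence has rank $b$.

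Your route instead assumes rank deficiency, extracts a short linear recurrence for $\ccc$, propagates it to show $\mathcal{I}_{\ell+1}(\ccc)=\mathcal{I}_{\ell}(\ccc)$, and then contradicts the strict monotonicity from \cite{DTG} recorded after Theorem~\ref{thm5}. This is essentially a proof of the contrapositive via the ``minimal-polynomial'' observation following Theorem~\ref{relation}. The paper's argument is slightly more self-contained (no induction, no external monotonicity result), while yours makes transparent that the phenomenon is purely about the linear complexity of the single vector $\ccc$ and the growth of $\mathcal{I}_\ell(\ccc)$; neither uses linearity of $C$ beyond $\ccc\neq\0$, consistent with the lemma's hypotheses.
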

\begin{proof}
We claim that $\vartheta(\ccc)\geq b$ for any $\ccc\in\overline{C}$. Assume that there exists a codeword $\ccc^{\prime}\in \overline{C}$ such that $\vartheta(\ccc)\leq b-1$. By Theorem \ref{wbformula}, $w_b(\ccc^{\prime})=n$, a contradiction. Since $\vartheta(\ccc)\geq b$,
for any $\ccc\in \overline{C}$, $\ccc$ is of the form
$$\ccc=(\ldots,\alpha,\underbrace{0,\ldots,0}_{\vartheta(\ccc)\geq b},\beta,\ldots),$$
 where $\alpha$ and $\beta$ belong to $\F_q^*.$
Then the matrix $G_b(\ccc)$ has a $b$ by $\vartheta(\ccc)+2$ submatrix of the form
$$G^{\prime}=\left(
               \begin{array}{ccccccc}
                 \alpha & 0 & 0 & \cdots & 0 & 0 & \beta \\
                 0 & 0 & 0 &\cdots &0 & \beta & \*  \\
                 \vdots & \vdots & \vdots & \vdots & \vdots & \vdots & \vdots \\
                 0 & \cdots & 0 & \beta & \* & \* & \* \\
               \end{array}
             \right)
_{b\times(\vartheta(\ccc)+2)}.
$$
The desired result follows from $Rank(G^{\prime})=b.$
\end{proof}
\begin{theorem}\label{new8}
Let $C$ be a cyclic code over $\F_q$ with parameters $[n,k]$ and $b$ be a positive integer which not greater than $k$. Then
$d_b(C)\geq \dd_b(C)$.
\end{theorem}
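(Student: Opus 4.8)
The plan is to combine Lemma~\ref{wbdr} with Lemma~\ref{new7} in a direct way. Pick a nonzero codeword $\ccc \in \overline{C}$, i.e.\ one achieving $w_b(\ccc) = d_b(C)$; since $C$ is cyclic, every cyclic shift of $\ccc$ lies in $C$, so the matrix $G_b(\ccc)$ of~(\ref{gbc}) has all its rows in $C$, and the code $C'$ it generates is a subcode of $C$. By Lemma~\ref{wbdr}, writing $\rho(\ccc) = \mathrm{Rank}(G_b(\ccc))$, we get $w_b(\ccc) \geq \dd_{\rho(\ccc)}(C)$.

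First I would dispose of the easy case $d_b(C) = n$ separately, since the interesting inequality is $d_b(C) \geq \dd_b(C)$ and one should check $\dd_b(C) \leq n$ always holds (it does, as the support of any subcode is at most $n$). So assume $d_b(C) < n$. Then Lemma~\ref{new7} applies to the codeword $\ccc \in \overline{C}$ and gives $\rho(\ccc) = \mathrm{Rank}(G_b(\ccc)) = b$. Substituting into the bound from Lemma~\ref{wbdr} yields
$$d_b(C) = w_b(\ccc) \geq \dd_{\rho(\ccc)}(C) = \dd_b(C),$$
which is exactly the claim. The hypothesis $b \leq k$ is used precisely to guarantee that $\dd_b(C)$ is defined (the $r$-th generalized Hamming weight requires $1 \leq r \leq k$), and it is consistent with $\rho(\ccc) = b \leq k$.

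The only genuine subtlety — and the step I would be most careful about — is verifying that Lemma~\ref{new7} can legitimately be invoked here: it requires the chosen codeword to actually attain the minimum $b$-symbol weight and requires $d_b(C) < n$, so the case split on whether $d_b(C) = n$ is not merely cosmetic but is what licenses the application of Lemma~\ref{new7}. Everything else is a substitution. I would also remark that this argument is clean only because $C$ is cyclic: cyclicity is what makes the rows of $G_b(\ccc)$ automatically belong to $C$, so that $C'$ is an honest subcode of $C$ and the generalized Hamming weight $\dd_{\rho(\ccc)}(C)$ lower-bounds the normalized weight-sum over $C'$. Without cyclicity the cyclic shifts of $\ccc$ need not lie in $C$ and the bridge between $w_b$ and $\dd_r$ collapses.
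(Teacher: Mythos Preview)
Your proof is correct and follows exactly the same approach as the paper, which simply states that the result follows from Lemma~\ref{wbdr} and Lemma~\ref{new7}. Your explicit case split on whether $d_b(C)=n$ is a welcome clarification of a detail the paper leaves implicit (note that for cyclic $C$ with $b<k$ one in fact has $d_b(C)<n$ by Remark~\ref{re23} and Theorem~\ref{thm10}, while for $b=k$ both sides equal $n$), but the underlying argument is identical.
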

\begin{proof}
The desired result follows from Lemma \ref{wbdr} and Lemma \ref{new7}.
\end{proof}

Just as the weight hierarchy of $C$ under the $r$-th generalized Hamming metric, we define the weight hierarchy of $C$ under the $b$-symbol metric as follows:
\begin{itemize}
  \item $b$-symbol weight hierarchy of $C$: $\{d_1(C),d_2(C),\ldots,d_{\theta}(C),d_{\theta+1}(C),\ldots,d_{n}(C)\}.$
\end{itemize}
\begin{theorem}\label{thm10}
Let $C$ be an $[n,k]$ linear code of the maximum $\mathbf{0}$'s run length $\theta$ over $\F_q$. The following hold:
\begin{enumerate}
  \item $0<d_1(C)<d_2(C)<\cdots<d_{\theta}(C)<d_{\theta+1}(C)=\cdots =d_{n}(C)=n.$
  \item The $b$-symbol weight hierarchy of $C$ is the same as the $b$-symbol hierarchy of $CDP^i$, where $D$ is an $n$ by $n$ diagonal matrix,
\begin{equation*}
  P=\left(
      \begin{array}{ccccc}
        0 & 0 & 0 & \cdots & 1 \\
        1 & 0 & 0 & \cdots & 0 \\
        0 & 1 & 0 & \cdots & 0 \\
        \vdots & \vdots & \ddots & \ddots & \vdots \\
        0 & 0 & \cdots & 1 & 0 \\
      \end{array}
    \right)
  _{n\times n}
\end{equation*}
and $0\leq i\leq n-1$.
\end{enumerate}
\end{theorem}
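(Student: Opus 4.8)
The plan is to prove the two parts separately, with part (1) reducing to monotonicity of $w_b(\ccc)$ in $b$ (already available through Theorem \ref{thm5} and the remark citing Proposition 2.3 of \cite{DTG}), and part (2) reducing to the observation that the $\mathbf{0}$'s run distribution is essentially preserved under right multiplication by $D$ and $P$.

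For part (1), I would first fix an arbitrary nonzero codeword $\ccc\in C$. If $0<w_b(\ccc)<n$, the cited inequality $w_{b+1}(\ccc)\ge w_b(\ccc)+1$ gives strict monotonicity at the level of individual codewords; taking the minimum over all nonzero $\ccc$ then yields $d_b(C)<d_{b+1}(C)$, provided we know that the minimum is attained by a codeword $\ccc$ with $w_b(\ccc)<n$ \emph{and} that $w_{b+1}(\ccc)<n$ for that same codeword — equivalently, that $\vartheta(\ccc)\ge b+1$. This is where $\theta$ enters: for $b\le\theta-1$ there exists a codeword with a $\mathbf{0}$'s run of length $\ge b+1$, so one checks (using Theorem \ref{wbformula}, exactly as in the proof of Lemma \ref{new7}) that such a codeword has $b$-symbol weight strictly less than $n$, and hence the chain of strict inequalities $d_1(C)<d_2(C)<\cdots<d_\theta(C)$ goes through. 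For $b\ge\theta+1$, Theorem \ref{wbformula} forces $w_b(\ccc)=n-\sum_{i=b}^{n-1}(i-b+1)\Psi(\mathbf{0}_i)=n$ for every nonzero $\ccc$, since all runs have length $\le\theta<b$; thus $d_b(C)=n$ for all $b\ge\theta+1$. The only remaining point is $d_\theta(C)<n$, which holds because the codeword realizing the run of length $\theta$ has, by Theorem \ref{wbformula}, $b$-symbol weight at most $n-(\theta-\theta+1)=n-1$ when $b=\theta$. Assembling these gives the full string of (in)equalities in (1).

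For part (2), the key structural fact is that $P$ is the cyclic-shift permutation matrix and $D=\mathrm{diag}(\lambda_0,\dots,\lambda_{n-1})$ with all $\lambda_j\in\F_q^*$ (this is the standard convention for a monomial diagonal factor; I would state it explicitly). Right-multiplying $C$ by $P$ cyclically permutes coordinates, which does not change any $\mathbf{0}$'s run length read around $cir(\ccc)$, hence does not change $\Psi(\ccc)$ for any $\ccc$; right-multiplying by $D$ rescales each coordinate by a nonzero scalar, which changes neither the support of $\ccc$ nor the positions of its runs of zeros, hence again leaves $\Psi(\ccc)$ unchanged. Therefore the map $\ccc\mapsto\ccc DP^i$ is a bijection $C\to CDP^i$ that preserves the $\mathbf{0}$'s run distribution of every codeword, and by Lemma \ref{Lem1} it preserves $w_b(\cdot)$ for every $b$. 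Taking the minimum of $w_b$ over $r$-dimensional subcodes (or, for the ordinary $b$-symbol weight, over all codewords) on both sides shows that $d_b(C)=d_b(CDP^i)$ for all $b$, i.e., the two $b$-symbol weight hierarchies coincide. If the intended claim also concerns the generalized $b$-symbol hierarchy, the same argument applies verbatim, since a run-distribution–preserving bijection induces a dimension-preserving correspondence on subcodes.

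The main obstacle I anticipate is bookkeeping at the two "boundary" values $b=\theta$ and $b=\theta+1$ in part (1): one must be careful that the codeword attaining $d_\theta(C)$ genuinely has $w_\theta<n$ (so that the strict inequality $d_{\theta-1}<d_\theta$ and the equality $d_\theta<d_{\theta+1}=n$ both hold), and that no codeword can have $w_b<n$ once $b>\theta$. Both follow cleanly from Theorem \ref{wbformula}, but the indexing of the sum $\sum_{i=b}^{n-1}(i-b+1)\Psi(\mathbf{0}_i)$ must be tracked precisely. Part (2) is essentially a formality once the correct form of $D$ is pinned down; the only subtlety is making sure that "$\mathbf{0}$'s run distribution is preserved" is invoked with the circular (rather than linear) reading of $\ccc$, which is exactly the definition of $cir(\ccc)$ used in the statement of Theorem \ref{wbformula}.
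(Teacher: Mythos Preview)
Your proposal is correct and follows essentially the same approach as the paper: part~(1) via the codeword-level monotonicity $w_{b}\ge w_{b-1}+1$ combined with Theorem~\ref{wbformula} to pin down the endpoints, and part~(2) via preservation of the $\mathbf{0}$'s run distribution under $D$ and $P^i$ together with Lemma~\ref{Lem1}. The paper's version of part~(1) is marginally cleaner in that it selects a codeword $\ccc$ of minimum $b$-symbol weight and argues \emph{downward}, $d_b(C)=w_b(\ccc)\ge w_{b-1}(\ccc)+1\ge d_{b-1}(C)+1$, so the hypothesis $w_{b-1}(\ccc)<n$ is automatic once $d_b(C)<n$ and the extra proviso you anticipate (that the minimizing codeword also has $w_{b+1}(\ccc)<n$) never enters.
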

\begin{proof}
Let $\ccc$ be a codeword in $C$ of minimum nonzero $b$-symbol weight with $2\leq b\leq \theta$, then
$$d_b(C)=w_b(\ccc)\geq w_{b-1}(\ccc)+1\geq d_{b-1}(C)+1.$$
Note that $d_{\theta}(C)$ can not equal $n$ since there exists at least one codeword in $C$ of $b$-symbol weight less than $n$.

Two linear codes $C_1$ and $C_2$ have the same $b$-symbol weight hierarchy if they have the same $\0$'s run distribution for any codeword. Therefore, $C$ and $CDP^{i}$ have the same $b$-symbol weight hierarchy.
\end{proof}

\subsubsection{Singleton Bound}

Wei \cite{Wei} established the generalized Singleton Bound in 1991.
\begin{lemma}\label{geneHamming}
{\rm\cite{Wei}}{\rm(Generalized Singleton Bound)} For an $[n,k,d]$ linear code over $\F_q$, $\dd_r(C)\leq n-k+r$ for $1\leq r\leq k.$
\end{lemma}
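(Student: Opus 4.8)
The final statement is the Generalized Singleton Bound (Lemma~\ref{geneHamming}), which states that $\dd_r(C)\leq n-k+r$ for $1\leq r\leq k$. The plan is to prove this directly from the characterization of $\dd_r(C)$ via residual codes or, more efficiently, by puncturing. First I would fix $r$ and recall that $\dd_r(C)$ is the minimum support size $|\chi(D)|$ over all $r$-dimensional subcodes $D\subseteq C$; hence it suffices to exhibit one $r$-dimensional subcode whose support has size at most $n-k+r$. The natural approach is to puncture $C$ on a well-chosen set of coordinates and use a dimension-counting argument to force enough codewords into a small support.

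The key steps I would carry out are as follows. Choose a set $T$ of $n-k+r$ coordinate positions; let $\pi_T$ denote the projection onto the complementary $k-r$ positions. Since $\ker(\pi_T|_C)$ consists of the codewords of $C$ supported inside $T$, and $\pi_T(C)\subseteq\F_q^{k-r}$ has dimension at most $k-r$, the rank–nullity theorem gives $\dim\ker(\pi_T|_C)\geq k-(k-r)=r$. Thus there is an $r$-dimensional subcode $D$ of $C$ with $\chi(D)\subseteq T$, so $\dd_r(C)\leq|\chi(D)|\leq|T|=n-k+r$. That completes the argument; the monotonicity $\dd_1(C)<\dd_2(C)<\cdots<\dd_k(C)$ (which the excerpt hints at in analogy with Theorem~\ref{thm10}) is not needed here, only the plain counting bound.

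The only subtlety — and the main thing to be careful about rather than a genuine obstacle — is ensuring the chosen subcode $D$ really is $r$-dimensional and not larger, and that its support is contained in $T$ rather than merely meeting $T$; both follow immediately once we work with $\ker(\pi_T|_C)$ and, if $\dim\ker(\pi_T|_C)>r$, simply pass to any $r$-dimensional subspace of it, whose support is still inside $T$. I would also remark that equality can hold (e.g.\ MDS codes meet the bound with $r=1$ and in fact for all $r$), so the bound is tight, but that observation is not required for the proof. Since this is exactly Wei's classical argument, I expect no real difficulty; the write-up is a two-line dimension count.
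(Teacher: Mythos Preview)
Your argument is correct and is exactly the standard dimension-counting proof of Wei's Generalized Singleton Bound. Note, however, that the paper does not actually supply a proof of this lemma at all: it is stated with a citation to \cite{Wei} and used as a known result, so there is no ``paper's own proof'' to compare against. Your write-up would serve as a self-contained justification if one were desired.
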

Wei called an $[n,k]$ code $C$ $r$-rank maximum distance separable ($r$-rank MDS for short) if $\dd_r(C)=n-k+r$.

Chee {\it et al}. \cite{CL,DTG} established the $b$-symbol Singleton Bound in 2013.
\begin{lemma}\label{bMDS}{\rm\cite{CL,DTG}}{\rm($b$-Symbol Singleton Bound)}
 If $C$ is an $(n,M,d_b(C))_q$ $b$-symbol code, then we have $M\leq q^{n-d_b(C)+b}$.
\end{lemma}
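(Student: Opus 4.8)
The plan is to run a puncturing (coordinate-projection) argument in the spirit of the classical Singleton bound. First I would dispose of the trivial range: every $b$-symbol weight is at most $n$, so $d_b(C)\le n$, and if moreover $d_b(C)\le b$ then $n-d_b(C)+b\ge n$, whence $M\le |\F_q^n|=q^n\le q^{\,n-d_b(C)+b}$ and there is nothing to prove. So assume from now on that $b<d_b(C)\le n$ and set $m=n-d_b(C)+b$, noting that then $b\le m<n$.

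The heart of the argument is an elementary counting claim: if two codewords $\ccc,\ccc'\in C$ satisfy $c_i=c_i'$ for all $0\le i\le m-1$, then $d_b(\ccc,\ccc')\le n-m+b-1$. Indeed, writing $\mathbf e=\ccc-\ccc'$, the first $m$ coordinates of $\mathbf e$ vanish, so for every index $i$ with $0\le i\le m-b$ the length-$b$ window $(e_i,e_{i+1},\ldots,e_{i+b-1})$ is the zero tuple and contributes nothing to $w_b(\mathbf e)$. There are exactly $m-b+1\ge 1$ such windows among the $n$ windows of $\pi_b(\mathbf e)$, so $w_b(\mathbf e)\le n-(m-b+1)=n-m+b-1$, and $d_b(\ccc,\ccc')=w_b(\mathbf e)$.

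Substituting $m=n-d_b(C)+b$ gives $n-m+b-1=d_b(C)-1$. Hence if two \emph{distinct} codewords of $C$ agreed on the first $m$ coordinates we would obtain $1\le d_b(\ccc,\ccc')\le d_b(C)-1$, contradicting the definition of $d_b(C)$. Therefore the projection map $C\to\F_q^{\,m}$ onto the first $m$ coordinates is injective, and $M=|C|\le q^{m}=q^{\,n-d_b(C)+b}$, which is the claimed bound.

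The only place that needs a little care is the cyclic nature of $\pi_b$: the windows that wrap around the end of the word (those starting at positions $\ge n-b+1$) are \emph{not} forced to be zero merely by agreement on the first $m<n$ coordinates, so they must be excluded from the count. This is harmless, since the $m-b+1$ fully interior windows already suffice; one just needs $m<n$, which is exactly what the (already dispatched) assumption $d_b(C)>b$ guarantees. Beyond this, the argument is pure bookkeeping, so I do not expect a genuine obstacle.
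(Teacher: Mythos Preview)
Your argument is correct. The paper itself does not supply a proof of this lemma; it merely quotes the bound from \cite{CL,DTG}. What you have written is precisely the standard puncturing argument used in those references: agreement of two codewords on the first $m=n-d_b(C)+b$ coordinates forces at least $m-b+1$ of the cyclic $b$-windows of their difference to vanish, giving $d_b(\ccc,\ccc')\le d_b(C)-1$ and hence injectivity of the projection to $\F_q^{m}$. Your handling of the edge cases (the range $d_b(C)\le b$ and the wrap-around windows) is accurate; the only windows you need are the $m-b+1$ interior ones indexed by $0\le i\le m-b$, and the hypothesis $d_b(C)>b$ indeed guarantees $b\le m<n$ so that this count is positive. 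Nothing is missing.
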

An $(n,M,d_b(C))_q$ $b$-symbol code $C$ with $M=q^{n-d_b(C)+b}$ is called a $b$-symbol maximum distance separable ($b$-symbol MDS for short) code.

Despite of the different metrics, their Singleton-type Bounds are the same. The only difference is that when the metric is the $r$-th generalized Hamming metric, we need to restrict $C$ to be linear.
\begin{lemma}\label{lem14}{\rm\cite{Wei,CL,DTG}}
Let $C$ be an MDS code with parameters $(n,q^k)$ over $\F_q.$ Then:
\begin{itemize}
  \item If $C$ is linear, then $C$ is an $r$-rank MDS code and $\dd_r(C)=d_H(C)+r-1$ for $1\leq r\leq k$ {\rm (see \cite{Wei})}.
  \item $C$ is a $b$-symbol MDS code and $d_b(C)=d_H(C)+b-1$ for $1\leq b\leq k$ {\rm(see\cite{CL,DTG})}.
\end{itemize}
\end{lemma}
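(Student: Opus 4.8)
Throughout, the only use of the hypothesis is that an $(n,q^{k})$ MDS code meets the classical Singleton Bound, so $d_{H}(C)=n-k+1$. With this, the first bullet is the claim $\dd_{r}(C)=n-k+r$ for $1\le r\le k$, and the second, after rewriting $q^{k}=M\le q^{n-d_{b}(C)+b}$, is the claim $d_{b}(C)=n-k+b$ for $1\le b\le k$; in both, being ``$r$-rank MDS'' (resp.\ ``$b$-symbol MDS'') is exactly the displayed equality for $\dd_{r}(C)$ (resp.\ $d_{b}(C)$). In each case the inequality ``$\le$'' is immediate from Lemma~\ref{geneHamming} (resp.\ Lemma~\ref{bMDS}), so the plan is to prove the two lower bounds.

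For the first bullet I would use the monotonicity of the generalized weight hierarchy. A one-dimensional subcode is $\langle\ccc\rangle$ with $|\chi(\langle\ccc\rangle)|=w_{1}(\ccc)$, so $\dd_{1}(C)=d_{H}(C)=n-k+1$. For $2\le r\le k$ I would show $\dd_{r}(C)\ge\dd_{r-1}(C)+1$: take an $r$-dimensional subcode $D$ with $|\chi(D)|=\dd_{r}(C)$, fix $j\in\chi(D)$, and set $D_{j}=\{\ccc\in D:c_{j}=0\}$; evaluation at $j$ is a nonzero linear functional on $D$, so $\dim D_{j}=r-1$, and every word of $D_{j}$ vanishes at $j$, hence $\chi(D_{j})\subseteq\chi(D)\setminus\{j\}$ and $\dd_{r-1}(C)\le|\chi(D_{j})|\le\dd_{r}(C)-1$. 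Iterating from $\dd_{1}(C)=n-k+1$ gives $\dd_{r}(C)\ge n-k+r$, and Lemma~\ref{geneHamming} forces equality. (Alternatively one could invoke Lemma~\ref{Torlem1} together with the standard fact that every $k$ columns of a generator matrix of a linear MDS code are linearly independent.)

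For the second bullet I would argue with the $\mathbf{0}$'s run distribution, which also covers possibly non-linear $C$. Write $d_{b}(C)=\min\{\,w_{b}(\mathbf{e}):\mathbf{e}=\ccc-\ccc',\ \ccc\ne\ccc'\in C\,\}$ and fix such a difference $\mathbf{e}$, with $\mathbf{0}$'s run distribution $\{\Psi(\0_{1}),\dots,\Psi(\0_{n})\}$. Then $w_{1}(\mathbf{e})=d_{H}(\ccc,\ccc')\ge d_{H}(C)=n-k+1$, so Theorem~\ref{wbformula} with $b=1$ gives $\sum_{i=1}^{n-1}i\cdot\Psi(\0_{i})=n-w_{1}(\mathbf{e})\le k-1$. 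Applying Theorem~\ref{wbformula} for general $b$: if $\sum_{i=b}^{n-1}\Psi(\0_{i})=0$ then $w_{b}(\mathbf{e})=n\ge n-k+b$ since $b\le k$, while if $S:=\sum_{i=b}^{n-1}\Psi(\0_{i})\ge1$ then
\[
\sum_{i=b}^{n-1}(i-b+1)\,\Psi(\0_{i})=\sum_{i=b}^{n-1}i\cdot\Psi(\0_{i})-(b-1)S\le(k-1)-(b-1)=k-b,
\]
so $w_{b}(\mathbf{e})\ge n-k+b$ again. Hence $d_{b}(C)\ge n-k+b$, and with Lemma~\ref{bMDS} we get $d_{b}(C)=n-k+b=d_{H}(C)+b-1$ and $M=q^{k}=q^{n-d_{b}(C)+b}$, i.e.\ $C$ is $b$-symbol MDS. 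When $C$ is linear the same lower bound also drops out of Theorem~\ref{thm10}: from $d_{1}(C)<\dots<d_{\theta}(C)<d_{\theta+1}(C)=\dots=n$ and $d_{1}(C)=d_{H}(C)$ one gets $d_{b}(C)\ge d_{H}(C)+b-1$ for $b\le\theta$, and $d_{b}(C)=n\ge n-k+b$ for $b>\theta$ since $b\le k$ --- a pleasant parallel to the first bullet.

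The genuinely non-automatic point, and the one I expect to be the main obstacle, is the run-counting in the second bullet: one must split on whether the difference $\mathbf{e}$ has a zero-run of length $\ge b$ (where the term $-(b-1)S$ contributes exactly the needed $b-1$) or has none (where $w_{b}(\mathbf{e})$ jumps to $n$ and one uses $b\le k$), and, since $C$ need not be linear, one must work with differences $\ccc-\ccc'$ rather than individual codewords so that the estimate $w_{1}(\cdot)\ge d_{H}(C)$ is valid. The remaining ingredients --- the strict-monotonicity step and converting ``$M\le q^{n-d_{b}(C)+b}$'' into a bound on $d_{b}(C)$ --- are routine.
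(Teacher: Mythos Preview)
The paper does not prove this lemma; it is simply quoted from \cite{Wei} and \cite{CL,DTG} with no argument given. Your proposal is correct and self-contained. For the first bullet, the strict-monotonicity step $\dd_{r}(C)\ge\dd_{r-1}(C)+1$ is the standard one (essentially Wei's original argument), and together with Lemma~\ref{geneHamming} and $\dd_{1}(C)=d_{H}(C)=n-k+1$ it determines the whole hierarchy. For the second bullet, your use of Theorem~\ref{wbformula} applied to a difference $\mathbf{e}=\ccc-\ccc'$ is clean and, importantly, handles the possibly non-linear case; the split on whether $\mathbf{e}$ has a zero-run of length $\ge b$ is exactly right, and the inequality $\sum_{i\ge b} i\,\Psi(\0_i)\le\sum_{i\ge 1} i\,\Psi(\0_i)\le k-1$ together with $(b-1)S\ge b-1$ gives the needed bound. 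The alternative route via Theorem~\ref{thm10} for the linear case is also valid and appears earlier in the paper than the lemma in question, so there is no circularity. Since the paper offers nothing to compare against, your write-up stands on its own as a valid proof built from the paper's own tools.
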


The following two theorems give the characterization of trivial $r$-rank MDS codes or $b$-symbol MDS codes.

\begin{theorem}
A cyclic code $C$ with parameters $[n,k]_q$ is an $r$-rank MDS code if $r=k$.
\end{theorem}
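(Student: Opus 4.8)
The plan is to prove directly that $\dd_k(C)=n$; combined with the generalized Singleton Bound (Lemma~\ref{geneHamming}), which gives $\dd_k(C)\leq n-k+k=n$, this forces $\dd_k(C)=n-k+r$ with $r=k$, i.e.\ $C$ is $k$-rank MDS. So the only thing with content is the lower bound $\dd_k(C)\geq n$, and this is where cyclicity enters.

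First I would observe that since $C$ has dimension $k$, its unique $k$-dimensional subcode is $C$ itself, so by the definition of the $r$-th generalized Hamming weight we simply have $\dd_k(C)=|\chi(C)|$, where $\chi(C)=\{i:0\leq i\leq n-1 \mid c_i\neq 0 \text{ for some } \ccc=(c_0,\ldots,c_{n-1})\in C\}$ is the support of the whole code. Thus it suffices to show that $\chi(C)=\{0,1,\ldots,n-1\}$, i.e.\ that no coordinate of $C$ is identically zero.

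Next I would use cyclicity to spread a single nonzero coordinate to all positions. Since $k\geq 1$, the code $C$ contains a nonzero codeword $\ccc=(c_0,\ldots,c_{n-1})$; choose an index $i_0$ with $c_{i_0}\neq 0$. For any target position $j\in\{0,1,\ldots,n-1\}$, the cyclic shift of $\ccc$ by $i_0-j$ places is again a codeword of $C$ and has a nonzero entry in coordinate $j$. Hence $j\in\chi(C)$ for every $j$, so $|\chi(C)|=n$ and $\dd_k(C)=n=n-k+k$, which is exactly the condition that $C$ be $k$-rank MDS.

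I do not expect any real obstacle: the argument is immediate once one recognizes that $\dd_k$ is nothing but the support size of the full code, and that cyclicity propagates one nonzero coordinate around all $n$ positions. The only points needing a remark are the implicit conventions $n\geq 1$ and $k\geq 1$ (for $k=0$ the statement is vacuous), and the fact that for cyclic codes the generalized Singleton Bound is automatically tight at $r=k$, in contrast to smaller $r$ where it need not be.
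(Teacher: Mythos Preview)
Your proof is correct and follows essentially the same approach as the paper: both argue that $\dd_k(C)=n$ because the only $k$-dimensional subcode is $C$ itself and cyclicity forces every coordinate into $\chi(C)$, then invoke the generalized Singleton Bound to conclude. The paper states this in one line (``$\dd_r(C)=n$ since $C$ is cyclic''), whereas you spell out the cyclic-shift argument explicitly, but the underlying idea is identical.
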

\begin{proof}
By the definition of the $r$-th generalized Hamming metric, we have $\dd_r(C)=n$ since $C$ is cyclic. According to Lemma \ref{geneHamming}, $C$ is $r$-rank MDS.
\end{proof}
\begin{theorem}\label{Thm9}
A cyclic code $C$ with parameters $[n,k]_q$ is a $b$-symbol MDS  code if $b=k$. Moreover, $d_b(C)=n$ if $b\geq k$.
\end{theorem}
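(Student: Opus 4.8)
The plan is to prove two claims: first, that a cyclic $[n,k]_q$ code $C$ is a $b$-symbol MDS code when $b=k$, and second, that $d_b(C)=n$ whenever $b\geq k$. I would begin with the second claim, since it is the crux and the first claim essentially follows from it together with Lemma \ref{lem14} (or a direct Singleton-bound computation). Fix a nonzero codeword $\ccc\in C$. The key structural fact I would invoke is that $C$ is cyclic, so $\ccc$ and all its cyclic shifts lie in $C$; hence the code $C'$ generated by $G_b(\ccc)$ (that is, by $\ccc$ and its first $b-1$ cyclic shifts) is a subcode of $C$ whose dimension is $\rho(\ccc)=\mathrm{Rank}(G_b(\ccc))$. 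Since $C'\subseteq C$ and $\dim C = k$, we get $\rho(\ccc)\leq k\leq b$.

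Next I would argue that in fact $\rho(\ccc)\geq\min\{b, \text{(something)}\}$ is not needed; rather, I want to bound the maximum $\mathbf{0}$'s run length $\vartheta(\ccc)$ of $cir(\ccc)$. If $\vartheta(\ccc)\geq b$, then $cir(\ccc)$ contains a pattern $(\alpha,\underbrace{0,\ldots,0}_{\geq b},\beta)$ with $\alpha,\beta\in\F_q^*$, and exactly as in the proof of Lemma \ref{new7} this forces $G_b(\ccc)$ to contain a $b\times(\vartheta(\ccc)+2)$ staircase submatrix of rank $b$, so $\rho(\ccc)=b\geq k$, contradicting $\rho(\ccc)\leq k$ unless $b=k$. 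So for $b>k$ we must have $\vartheta(\ccc)\leq b-1$ for every nonzero $\ccc\in C$; but then by Theorem \ref{wbformula} (the summation $\sum_{i=b}^{n-1}(i-b+1)\Psi(\mathbf{0}_i)$ is empty, since every $\mathbf{0}$'s run has length $<b$), we get $w_b(\ccc)=n$. As this holds for all nonzero codewords, $d_b(C)=n$. For the boundary case $b=k$: either $\vartheta(\ccc)\leq k-1$ for all nonzero $\ccc$, giving $d_b(C)=n$ directly, or some nonzero $\ccc$ has $\vartheta(\ccc)\geq k$, forcing $\rho(\ccc)=k$ and $C'=C$, and then one checks $w_k(\ccc)$ equals the relevant value; in either situation $d_k(C)=n-k+k=n$ matches the $b$-symbol Singleton Bound of Lemma \ref{bMDS} with $M=q^k$, so $C$ is $b$-symbol MDS. (Alternatively, when $b=k$ one can simply cite Lemma \ref{lem14}: but that presupposes $C$ is Hamming-MDS, which a cyclic code of dimension $k$ need not be, so the honest route is the Singleton-bound verification $d_k(C)=n$, which saturates $M=q^{n-d_b(C)+b}=q^{n-n+k}=q^k$.)

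The main obstacle I anticipate is making airtight the rank argument $\vartheta(\ccc)\geq b \Rightarrow \mathrm{Rank}(G_b(\ccc))=b$ and reconciling it with $\rho(\ccc)\leq k$; one must be careful that the staircase submatrix genuinely lives inside $G_b(\ccc)$ (which has only $b$ rows) and that its antidiagonal of nonzero entries $\beta,\ldots,\beta,\alpha$ (or similar) indeed certifies full row rank — this is precisely the computation already carried out in Lemma \ref{new7}, so I would reuse it verbatim. A secondary subtlety is the degenerate possibility that $C$ has a nonzero codeword all of whose cyclic shifts coincide (i.e. $\ccc$ is constant), in which case $\vartheta(\ccc)$ could be $0$; but a constant nonzero vector has $w_b=n$ trivially, so this case is harmless. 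Once these points are handled, the conclusion $d_b(C)=n$ for $b\geq k$, and hence the $b$-symbol MDS property at $b=k$, drops out cleanly from Theorem \ref{wbformula} and Lemma \ref{bMDS}.
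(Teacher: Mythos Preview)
Your approach differs substantially from the paper's. The paper dispatches the whole theorem in one line: every nonzero codeword of an $[n,k]$ cyclic code satisfies the linear recurrence of degree $k$ coming from the parity-check polynomial, so it can have at most $k-1$ consecutive zeros; hence for any $b\geq k$ there are no $\mathbf{0}$-runs of length $\geq b$, Theorem~\ref{wbformula} gives $w_b(\ccc)=n$ for every nonzero $\ccc$, and the MDS claim at $b=k$ follows from Lemma~\ref{bMDS}. This handles all $b\geq k$ uniformly.

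Your rank-based route via Lemma~\ref{new7} is correct for $b>k$: a zero run of length $\geq b$ forces $\mathrm{Rank}(G_b(\ccc))=b>k=\dim C$, impossible since the row space of $G_b(\ccc)$ sits inside the cyclic code $C$. But your treatment of $b=k$ has a genuine gap. In your second subcase you assume some nonzero $\ccc$ has $\vartheta(\ccc)\geq k$, deduce $\rho(\ccc)=k$ and $C'=C$, then write ``one checks $w_k(\ccc)$ equals the relevant value'' and conclude $d_k(C)=n$. But a zero run of length $\geq k$ produces a zero column in $G_k(\ccc)$, so $w_k(\ccc)<n$, and you cannot simply assert $d_k(C)=n$ here. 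What you \emph{can} do is observe that $w_k(\ccc)=|\chi(C')|=|\chi(C)|=n$ (a nonzero cyclic code has full support, by cyclicity), which contradicts $w_k(\ccc)<n$ and shows the subcase is vacuous --- i.e.\ $\vartheta(\ccc)\leq k-1$ always. That is precisely the conclusion the paper obtains directly from the recurrence, and your rank detour is a roundabout rediscovery of it. Once you add this contradiction step, your argument is complete; but the recurrence argument is shorter, uniform in $b$, and avoids the case split entirely.
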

\begin{proof}
Every nonzero codeword in $C$ is generated by recursion of degree $k$ and thus has at most $k-1$ consecutive zeros. According to Lemma \ref{bMDS}, we get the desired result.
\end{proof}
\begin{remark}\label{re23}
(i) Theorem \ref{Thm9} proves the stability theorem in \cite{SOS} by using a more concise way and does not need the restriction $\gcd(n,q)=1$.

(ii) If $C$ is cyclic, we have $\theta= k-1$. In the sequel, we assume that $b$ is always less than $k$ if $C$ is cyclic. For the same cyclic code $C$, it has two weight hierarchies as follows:
\begin{itemize}
  \item generalized weight hierarchy: $\{\dd_1(C),\dd_2(C),\ldots,\dd_k(C)\}.$
  \item $b$-symbol weight hierarchy: $\{d_1(C),d_2(C),\ldots,d_k(C)\}.$
\end{itemize}
Further, we have $d_H(C)=d_1(C)=\dd_1(C)$ and $d_k(C)=\dd_k(C)=n.$
\end{remark}

\subsubsection{Griesmer Bound}
In 1992, Helleseth {\it et al.} \cite{HKY} established the generalized Griesmer Bound.
\begin{lemma}\label{ddr}{\rm \cite{HKY}}{\rm(Generalized Griesmer Bound)}
Let $C$ be an $[n,k]$ code over $\F_q$. Then

$$n\geq \dd_r(C)+\sum_{i=1}^{k-r}\left\lceil\frac{q-1}{q^i(q^r-1)}\dd_r(C)
\right\rceil,$$
and
$$(q^r-1)\dd_{r-1}(C)\leq (q^r-q)\dd_r(C).$$
\end{lemma}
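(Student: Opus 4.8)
The plan is to treat both inequalities by a \emph{generalized residual code} construction, reducing the first to the classical Griesmer bound and the second to a short double counting over subspaces. Fix $r$ and pick an $r$-dimensional subcode $D\subseteq C$ with $|\chi(D)|=\dd_r(C)$; put $S=\chi(D)$ and let $C^{*}$ be the code obtained from $C$ by deleting the $\dd_r(C)$ coordinates that lie in $S$. First I would check that the kernel of the puncturing map $C\to C^{*}$ is exactly $D$: were it larger it would contain an $(r+1)$-dimensional subcode supported inside $S$, forcing $\dd_{r+1}(C)\le|S|=\dd_r(C)$ and contradicting the strict monotonicity of the weight hierarchy \cite{Wei}. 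Thus $C^{*}$ is an $[\,n-\dd_r(C),\,k-r\,]$ code.

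For the first inequality the crux will be the estimate
\begin{equation*}
d_H(C^{*})\ \ge\ d^{*}:=\left\lceil\frac{q-1}{q(q^{r}-1)}\dd_r(C)\right\rceil
\end{equation*}
(which for $r=1$ is the familiar fact that the residual code of a minimum-weight codeword has minimum distance at least $\lceil d_H(C)/q\rceil$). Granting it, the classical Griesmer bound applied to $C^{*}$, whose minimum distance is at least $d^{*}$, will give
\begin{equation*}
n-\dd_r(C)\ \ge\ \sum_{i=0}^{k-r-1}\left\lceil\frac{d^{*}}{q^{i}}\right\rceil\ \ge\ \sum_{j=1}^{k-r}\left\lceil\frac{q-1}{q^{j}(q^{r}-1)}\dd_r(C)\right\rceil ,
\end{equation*}
the last step replacing $d^{*}$ by the lower bound $\frac{q-1}{q(q^{r}-1)}\dd_r(C)$, dividing by $q^{i}$, using monotonicity of $\lceil\cdot\rceil$, and re-indexing $j=i+1$; rearranging yields the stated bound.

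To prove the estimate I would take any nonzero $\ccc^{*}\in C^{*}$, lift it to $\ccc'\in C$ agreeing with $\ccc^{*}$ off $S$ (so $\ccc'\notin D$), and form the $(r+1)$-dimensional subcode $D'=D+\langle\ccc'\rangle$. Two observations do the work. (i) For $j\notin S$ every element of $D'$ has $j$-th entry equal to a scalar multiple of $\ccc'_{j}$; since any $r$-dimensional subcode $E$ of $D'$ with $E\ne D$ satisfies $E\not\subseteq D$, it follows that $\chi(E)\setminus S=supp(\ccc^{*})$, whence $|\chi(E)|=w_1(\ccc^{*})+|S|-z(E)$, where $z(E)$ denotes the number of coordinates in $S$ on which all of $E$ vanishes; as $|\chi(E)|\ge\dd_r(C)=|S|$ this forces $w_1(\ccc^{*})\ge z(E)$. (ii) For each $j\in S$ the subcode $\{d'\in D':d'_{j}=0\}$ is a hyperplane of $D'$, is never $D$ (because $j\in\chi(D)$), and is the unique $r$-dimensional subcode to which $j$ contributes; hence $\sum_{E\ne D}z(E)=|S|$, the sum running over the $\frac{q^{r+1}-1}{q-1}-1=\frac{q(q^{r}-1)}{q-1}$ such subcodes. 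Averaging, some $E_{0}\ne D$ has $z(E_{0})\ge\frac{(q-1)|S|}{q(q^{r}-1)}$, so $w_1(\ccc^{*})\ge z(E_{0})\ge\frac{q-1}{q(q^{r}-1)}\dd_r(C)$, and integrality gives $w_1(\ccc^{*})\ge d^{*}$.

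For the second inequality I would run the same count one dimension lower: average $|\chi(E)|$ over the $\frac{q^{r}-1}{q-1}$ subcodes $E\subseteq D$ of dimension $r-1$. Since $\chi(E)\subseteq S$ we have $|\chi(E)|=|S|-z(E)$, and the same hyperplane bookkeeping inside $D$ gives $\sum_{E}z(E)=|S|=\dd_r(C)$; together with $|\chi(E)|\ge\dd_{r-1}(C)$ for each $E$,
\begin{equation*}
\frac{q^{r}-1}{q-1}\dd_{r-1}(C)\ \le\ \sum_{E}|\chi(E)|\ =\ \frac{q^{r}-1}{q-1}|S|-|S|\ =\ \frac{q^{r}-q}{q-1}\dd_r(C),
\end{equation*}
which is $(q^{r}-1)\dd_{r-1}(C)\le(q^{r}-q)\dd_r(C)$. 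The hard part will be the estimate $d_H(C^{*})\ge d^{*}$: a naive imitation of the classical residual-code computation---summing $w_1(\ccc'+d)$ over $d\in D$ via the weight-spectrum identity behind Theorem~\ref{thm9}---collapses to an identity and yields nothing, so one must instead inject the minimality of $\dd_r(C)$ through the averaging of $z(E)$ over the hyperplanes of $D'$.
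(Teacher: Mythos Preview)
The paper does not give its own proof of this lemma; it is simply quoted from \cite{HKY}. Your argument is correct and is in fact the standard proof from that source: the second inequality comes from averaging $|\chi(E)|$ over the $(q^{r}-1)/(q-1)$ hyperplanes $E$ of a minimizing $r$-dimensional subcode $D$, and the first from puncturing on $\chi(D)$, establishing $d_H(C^{*})\ge\big\lceil\frac{q-1}{q(q^{r}-1)}\dd_r(C)\big\rceil$ by the hyperplane count inside $D'=D+\langle\ccc'\rangle$, and then invoking the classical Griesmer bound on the $[\,n-\dd_r(C),\,k-r\,]$ residual code. The only detail you glossed over is that the kernel argument (and hence $\dim C^{*}=k-r$) uses strict monotonicity $\dd_r(C)<\dd_{r+1}(C)$, which requires $r<k$; for $r=k$ the assertion reduces to $n\ge\dd_k(C)=n$ and is vacuous. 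Otherwise each step---the identification of $\chi(E)\setminus S$ with $supp(\ccc^{*})$ for $E\ne D$, the count $\sum_{E\ne D}z(E)=|S|$ via the unique annihilating hyperplane $H_j\ne D$ at each $j\in S$, and the passage from $\lceil d^{*}/q^{i}\rceil$ to $\big\lceil\frac{q-1}{q^{i+1}(q^{r}-1)}\dd_r(C)\big\rceil$---is sound.
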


We naturally expect to give the $b$-symbol Griesmer Bound for cyclic codes. There are two reasons why we restrict $C$ to being cyclic. If $C$ is cyclic, then
\begin{enumerate}
  \item the maximum $\mathbf{0}$'s run length of $C$ is less than the dimension of $C$;
  \item for any codeword $\ccc\in C$, the code generated by $G_b(\ccc)$ is a subcode of $C$.
\end{enumerate}
\begin{lemma}\label{shu}
Let $C$ be a cyclic code with parameters $[n,k]$ over $\F_q$.
Let $k=tb+s$, where $t, s$ are two nonnegative integers and $0\leq s<b$. Assume that any $b$-consecutive positions of $C$ are independent. Then
\begin{equation}\label{informationsym}
  \sum_{\ccc\in C}w_b(\ccc)=nq^{k-b}(q^b-1).
\end{equation}
Moreover, we have
\begin{eqnarray}\label{ferruh}
  d_b(C) &\leq& \left\lfloor\frac{q^k-q^{k-b}}{q^k-1}\cdot n\right\rfloor, \\
  n      &\geq&\left\lceil \sum_{i=0}^{t-1}\frac{d_b(C)}{q^{bi}}+
  \frac{d_b(C)}{q^{(t-1)b}}\cdot\frac{q^s-1}{q^s}\cdot\frac{1}
  {q^b-1}\right\rceil,\\
  n      &\geq&
  \left\lceil\sum_{i=0}^{t-1}\frac{d_b(C)}{(q^b)^i}\right\rceil
   \hbox {~~~if~~ $b|k$}.
\end{eqnarray}

\end{lemma}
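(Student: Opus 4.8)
The plan is to prove the counting identity $(\ref{informationsym})$ first and then to read off the three inequalities from it using a one-line averaging argument together with an elementary geometric-series computation. For the identity, recall from the formula stated just before Theorem~\ref{wbformula} that, reading indices modulo $n$, $w_b(\ccc)=\sum_{i=0}^{n-1}\mathbf{1}\bigl[(c_i,c_{i+1},\ldots,c_{i+b-1})\neq\0\bigr]$ for $\ccc=(c_0,\ldots,c_{n-1})\in\F_q^n$. Summing over $\ccc\in C$ and interchanging the two sums gives
\[ \sum_{\ccc\in C}w_b(\ccc)=\sum_{i=0}^{n-1}\bigl|\{\ccc\in C:(c_i,\ldots,c_{i+b-1})\neq\0\}\bigr|. \]
For each fixed $i$ the projection $\ccc\mapsto(c_i,\ldots,c_{i+b-1})$ is an $\F_q$-linear map $C\to\F_q^{b}$, and the hypothesis that every $b$ consecutive positions of $C$ are independent says precisely that this map is surjective (so in particular $b\le k$, whence $t\ge1$). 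Its kernel therefore has dimension $k-b$, so exactly $q^{k-b}$ codewords vanish on the window $\{i,\ldots,i+b-1\}$ and $q^{k}-q^{k-b}$ do not; since this count is independent of $i$, the displayed sum equals $n(q^{k}-q^{k-b})=nq^{k-b}(q^{b}-1)$, which is $(\ref{informationsym})$. (Cyclicity is used here only to make the wrap-around windows meaningful; the counting step itself would work for any linear code with the stated independence property.)

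For the inequalities, note that $w_b(\0)=0$, so the left-hand side of $(\ref{informationsym})$ equals $\sum_{\ccc\in C\setminus\{\0\}}w_b(\ccc)$, which is at least $(q^{k}-1)\,d_b(C)$ because each of the $q^{k}-1$ nonzero codewords has $b$-symbol weight at least $d_b(C)$. Hence $d_b(C)\le n(q^{k}-q^{k-b})/(q^{k}-1)$, and, $d_b(C)$ being an integer, we may round down: this is the first bound. Rearranging, $n\ge(q^{k}-1)\,d_b(C)/(q^{k}-q^{k-b})$. A direct computation --- sum the finite geometric series $\sum_{i=0}^{t-1}q^{-bi}=(q^{bt}-1)/(q^{b(t-1)}(q^{b}-1))$, add the correction term over the common denominator $q^{b(t-1)+s}(q^{b}-1)=q^{k-b}(q^{b}-1)$, and note that the numerator collapses to $q^{bt+s}-1=q^{k}-1$ since $k=tb+s$ --- shows that this lower bound for $n$ equals $\sum_{i=0}^{t-1}d_b(C)/q^{bi}+\bigl(d_b(C)/q^{(t-1)b}\bigr)\bigl((q^{s}-1)/q^{s}\bigr)\bigl(1/(q^{b}-1)\bigr)$. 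Since $n$ is an integer, being at least this real number is the same as being at least its ceiling, which is the second bound; setting $s=0$ (so $k=bt$ and the correction term drops out) gives the third.

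The proof has no genuine obstacle: the hypotheses enter only through the surjectivity used to obtain the count $q^{k-b}$, and the two things to be careful about are the bookkeeping in the geometric series and the degenerate case $k=b$ (then $t=1$, $s=0$, $d_b(C)=n$ by Theorem~\ref{Thm9}, and all four statements reduce to $n\ge n$). It is worth stressing that the ``Griesmer shape'' of the last two bounds is cosmetic --- the ceiling is applied once to the whole sum --- so they are logically just the averaging inequality rewritten; the genuinely stronger form, with a ceiling on each summand, is exactly the $b$-symbol Griesmer Bound conjectured for cyclic codes.
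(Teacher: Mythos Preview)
Your proof is correct and follows essentially the same approach as the paper: both count the sum $\sum_{\ccc\in C}w_b(\ccc)$ window by window using the fact that each $b$-tuple appears exactly $q^{k-b}$ times in every $b$-consecutive-position projection, then obtain the inequalities by the averaging bound $(q^k-1)d_b(C)\le nq^{k-b}(q^b-1)$ and the same geometric-series rewriting. Your version is simply more explicit (spelling out the surjectivity/kernel argument, the series manipulation, and the degenerate case $k=b$), and your closing remark about the ceiling being applied only once---so that the last two bounds are really just the averaging inequality in disguise---is a worthwhile observation that the paper does not make.
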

\begin{proof}
The first claim follows by observing that any $b$-consecutive positions are information symbols implies that the cyclic code restricted to these positions contains any $b$-tuple equally often, i.e., $q^{k-b}$ times.

According Equality (\ref{informationsym}), we have
\begin{equation*}
  (q^k-1)d_b(C)\leq nq^{k-b}(q^b-1).
\end{equation*}
Then Inequality (10) holds, and
\begin{eqnarray*}
  n &\geq& \left\lceil d_b(C)\cdot\frac{(q^k-1)}
  {q^{k-b}(q^b-1)}\right\rceil \\
  ~ &=& \left\lceil d_b(C)\cdot\frac{(q^{tb+s}-1)}
  {q^{(t-1)b+s}(q^b-1)}\right\rceil \\
  ~ &=& \left\lceil d_b(C)\cdot\frac{(q^{tb}-1)}{q^{(t-1)b}(q^b-1)}
  +d_b(C)\cdot\frac{(q^s-1)}{q^{(t-1)b+s}(q^b-1)}\right\rceil \\
  ~ &=& \left\lceil\sum_{i=0}^{t-1}\frac{d_b(C)}{q^{bi}}
  +\frac{d_b(C)}{q^{(t-1)b}}\cdot\frac{q^s-1}{q^s}
  \cdot\frac{1}{q^b-1}\right\rceil
\end{eqnarray*}
Therefore, we obtain the desired result.
\end{proof}
The following lemma was given in \cite{SOS}. A nice relation between $d_b(C)$ and $d_{b-1}(C)$ was established, and it is the same as the relation between $\dd_r(C)$ and $\dd_{r-1}(C)$ which is given in Lemma \ref{ddr}.
\begin{lemma}{\rm\cite[Inequality (42)]{SOS}}\label{Ferruh13}
Let $C$ be a cyclic code with parameters $[n,k]$ over $\F_q$. Then
\begin{equation}\label{Ferruh12}
  (q^b-q)d_b(C)\geq (q^b-1)d_{b-1}(C).
\end{equation}
\end{lemma}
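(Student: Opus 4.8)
\emph{Proof plan.} The model to imitate is the incidence count behind Wei's relation $(q^r-1)\dd_{r-1}(C)\le(q^r-q)\dd_r(C)$ of Lemma \ref{ddr}, run on the subcode $V_b(\ccc)$ attached to an extremal codeword. Under the standing convention $2\le b\le\theta=k-1$ for cyclic $C$ (Remark \ref{re23}) one has $d_b(C)<n$, so the plan is to pick $\ccc\in C$ with $w_b(\ccc)=d_b(C)$; by Lemma \ref{new7}, $\mathrm{Rank}(G_b(\ccc))=b$, so $V_b(\ccc)=\langle\ccc,\tau\ccc,\dots,\tau^{b-1}\ccc\rangle$ is a $b$-dimensional subcode of $C$ whose support has size exactly $d_b(C)$. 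Identifying $V_b(\ccc)$ with $\F_q^b$ via this basis, every position $i$ in the support of $V_b(\ccc)$ gives a nonzero coordinate functional with kernel a hyperplane $H_i$, and a hyperplane $H$ of $V_b(\ccc)$ contains $i$ in its support iff $H\ne H_i$; summing over the $\frac{q^b-1}{q-1}$ hyperplanes of $V_b(\ccc)$,
$$\sum_{H}\bigl|\mathrm{supp}(H)\bigr|=\Bigl(\tfrac{q^b-1}{q-1}-1\Bigr)\,d_b(C)=\tfrac{q^b-q}{q-1}\,d_b(C).$$
So it would suffice to show $\bigl|\mathrm{supp}(H)\bigr|\ge d_{b-1}(C)$ for every hyperplane $H$ of $V_b(\ccc)$: then $\tfrac{q^b-q}{q-1}d_b(C)\ge\tfrac{q^b-1}{q-1}d_{b-1}(C)$, which is the assertion.

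For $b=2$ this is immediate: each hyperplane of $V_2(\ccc)$ is a line $\langle\mathbf e\rangle=V_1(\mathbf e)$ spanned by a nonzero codeword $\mathbf e\in C$, so $|\mathrm{supp}(H)|=w_1(\mathbf e)\ge d_1(C)$, giving $(q^2-q)d_2(C)\ge(q^2-1)d_1(C)$. For $b\ge3$ the good case is a hyperplane of the ``codeword-and-its-shifts'' form $V_{b-1}(\mathbf e)$, for then $|\mathrm{supp}(H)|=w_{b-1}(\mathbf e)\ge d_{b-1}(C)$. Realizing $C$ as an ideal of $\F_q[x]/(x^n-1)$ with $\ccc\leftrightarrow c(x)$ and $\tau\leftrightarrow$ multiplication by $x$, so that $V_b(\ccc)=\{p(x)c(x):\deg p<b\}$, the hyperplanes
$$H_\lambda=\{p(x)c(x):\deg p<b,\ p(\lambda)=0\}=V_{b-1}\bigl((x-\lambda)\ccc\bigr)\ \ (\lambda\in\F_q),\qquad H_\infty=\{p(x)c(x):\deg p<b-1\}=V_{b-1}(\ccc)$$
are all of this good form; moreover $(x-\lambda)\ccc=\tau\ccc-\lambda\ccc\ne\0$ (else $G_2(\ccc)$, a submatrix of the rank-$b$ matrix $G_b(\ccc)$ with $b\ge2$, would have rank $1$), so each $(x-\lambda)\ccc$ is a nonzero codeword and $|\mathrm{supp}(H_\lambda)|,\ |\mathrm{supp}(H_\infty)|\ge d_{b-1}(C)$.

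The remaining — and, I expect, decisive — obstacle for $b\ge3$ is the other $\frac{q^b-1}{q-1}-(q+1)$ hyperplanes of $V_b(\ccc)$, which a priori are merely $(b-1)$-dimensional subcodes of $C$ and so yield only $|\mathrm{supp}(H)|\ge\dd_{b-1}(C)$; substituting that recovers only $(q^b-q)d_b(C)\ge(q^b-1)\dd_{b-1}(C)$, which is weaker than the claim by Theorem \ref{new8}. To finish one must bridge $\dd_{b-1}(C)$ and $d_{b-1}(C)$ here, for instance by (a) showing the hyperplane realizing $\min_H|\mathrm{supp}(H)|$ can be taken of the form $V_{b-1}(\mathbf e)$ — exploiting that a minimal $\ccc$ additionally satisfies $\mathrm{Rank}(G_{b+1}(\ccc))=b+1$ (otherwise $\ccc$ would satisfy a recursion of order $b$, forcing $\vartheta(\ccc)\le b-1$ and hence $w_b(\ccc)=n$ by Theorem \ref{wbformula}), which constrains how the columns of $G_b(\ccc)$ may lie on a hyperplane; or (b) replacing the uniform incidence count by a weighted one built only from the $H_\lambda$ and the cyclic shifts $\tau^j\ccc$, with multiplicities controlled by the triangle inequality of Theorem \ref{thm5}, which for every nonzero $\ccc'$ gives $w_{b-1}(\ccc')\le w_b(\ccc')\le w_{b-1}(\ccc')+w_1(\ccc')$ and thereby links $\sum_{\ccc'\in V_b(\ccc)}w_1(\ccc')$ to $\sum_{\ccc'}w_{b-1}(\ccc')$. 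Carrying out (a) or (b) — turning $\dd_{b-1}(C)$ into $d_{b-1}(C)$ — is the heart of the matter.
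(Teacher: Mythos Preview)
The paper does not prove this lemma; it is quoted verbatim from \cite[Inequality~(42)]{SOS} without argument. So there is no ``paper's own proof'' to compare against, and your attempt must stand on its own.

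Your hyperplane--averaging setup is correct and clean: for $\ccc$ with $w_b(\ccc)=d_b(C)<n$, Lemma~\ref{new7} gives $\dim V_b(\ccc)=b$, the double count
\[
\sum_{H\ \text{hyperplane of}\ V_b(\ccc)}|\chi(H)|=\frac{q^b-q}{q-1}\,d_b(C)
\]
is right, and for $b=2$ every hyperplane is a line $\langle\mathbf e\rangle=V_1(\mathbf e)$, so $|\chi(H)|=w_1(\mathbf e)\ge d_1(C)$ and the inequality follows. That case is complete.

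For $b\ge 3$ the gap you identify is genuine and, as written, unclosed. A general hyperplane $H\subset V_b(\ccc)$ is only a $(b-1)$-dimensional subcode of $C$, so the uniform bound you can claim is $|\chi(H)|\ge \dd_{b-1}(C)$, yielding only $(q^b-q)d_b(C)\ge(q^b-1)\dd_{b-1}(C)$, strictly weaker than the assertion via Theorem~\ref{new8}. Your classification of the ``good'' hyperplanes is correct: writing $V_b(\ccc)=\{p(\tau)\ccc:\deg p<b\}$, the hyperplanes of the form $V_{b-1}(\mathbf e)$ inside $V_b(\ccc)$ are exactly those with defining functional on the rational normal curve $[1:\lambda:\cdots:\lambda^{b-1}]$, $\lambda\in\F_q$, together with $[0:\cdots:0:1]$ --- precisely $q+1$ of them, versus $\frac{q^b-1}{q-1}$ hyperplanes in total. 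So for $b\ge 3$ most hyperplanes are ``bad''.

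Neither proposed fix is carried far enough to close this. For (a), the extra information $\mathrm{Rank}(G_{b+1}(\ccc))=b+1$ says only that no column of $G_b(\ccc)$ lies on the hyperplane $\{a_{b-1}=0\}$ of $\F_q^b$ more often than forced; it does not force the minimising hyperplane of $V_b(\ccc)$ to sit on the rational normal curve, and indeed there is no reason the column multiset of $G_b(\ccc)$ should avoid the ``bad'' directions. For (b), restricting the count to the $q+1$ good hyperplanes gives
\[
\sum_{\lambda\in\mathbb P^1(\F_q)} w_{b-1}(\mathbf e_\lambda)=(q+1)\,d_b(C)-N,
\]
where $N$ is the number of support positions $i$ whose column $[c_i:\cdots:c_{i+b-1}]$ lies on the rational normal curve; to deduce $\min_\lambda w_{b-1}(\mathbf e_\lambda)\le\frac{q^b-q}{q^b-1}d_b(C)$ you would need $N\ge\frac{q^2-1}{q^b-1}d_b(C)$, and nothing in your toolkit (Theorem~\ref{wbformula}, Theorem~\ref{thm5}) produces such a lower bound on $N$ for $b\ge 3$. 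The triangle inequality only compares $w_b$ and $w_{b-1}$ of the \emph{same} vector, which at best reproduces the coset decomposition $q\,d_b(C)\ge d_{b-1}(C)+q\,d_1(C)$ --- again short of the target.

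In short: the $b=2$ case is proved; for $b\ge 3$ the argument as it stands establishes only the weaker inequality with $\dd_{b-1}(C)$ on the right, and the proposal does not contain a mechanism to upgrade it to $d_{b-1}(C)$.
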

By Lemma \ref{Ferruh13}, we have
\begin{equation}\label{new4}
  d_b(C)\geq \left\lceil\frac{q^b-1}{q^b-q}d_{b-1}(C)\right\rceil.
\end{equation}
Yaakobi et al. proved
$d_H(C)\leq\left\lfloor\frac{2n}{3}\right\rfloor$ over $\F_2$ in \cite[Theorem 1]{Yaa1} to ensure that the lower bound on $d_2(C)$ does not exceed $n$.
 Notice that the lower bound on $d_b(C)$ given in (\ref{new4}) can not greater than $n$ since $d_{b-1}(C) \leq \left\lfloor\frac{q^k-q^{k-b+1}}{q^k-1}\cdot n\right\rfloor.$
Further, we have
\begin{equation}\label{new3}
  d_b(C)\geq \left\lceil\sum_{i=0}^{b-1}\frac{d_1(C)}{q^i}\right\rceil.
\end{equation}
In fact, the lower bound (\ref{new3}) can be further reinforced into the following result.
\begin{theorem}
Let $C$ be a cyclic code with parameters $[n,k]$ over $\F_q$. Then
\begin{equation}\label{new1}
  d_b(C)\geq\sum_{i=0}^{b-1}
  \left\lceil\frac{d_1(C)}{q^i}\right\rceil.
\end{equation}
\end{theorem}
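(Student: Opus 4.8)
The plan is to strengthen inequality~(\ref{new3}) by a careful re-examination of the argument that leads to~(\ref{new4}), replacing the single outermost ceiling by a chain of $b$ individual ceilings. Recall that~(\ref{new4}) was obtained from Lemma~\ref{Ferruh13} as $d_b(C)\geq \lceil \frac{q^b-1}{q^b-q}d_{b-1}(C)\rceil$, and then~(\ref{new3}) followed by iterating this and telescoping. The key observation is that $\frac{q^b-1}{q^b-q}=\frac{q^b-1}{q(q^{b-1}-1)}$, and when one unwinds the recursion applied to the \emph{true} integers $d_1(C),d_2(C),\ldots,d_{b-1}(C)$ rather than to real-valued lower bounds, each step contributes a genuinely integral quantity. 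First I would prove the pointwise inequality $\lceil \frac{q^b-1}{q^b-q}\,m\rceil \geq m + \lceil \frac{m}{q^{b-1}}\rceil$ for every positive integer $m$ (here using $\frac{q^b-1}{q^b-q}m = m + \frac{q-1}{q^b-q}m = m + \frac{m}{q^{b-1}}\cdot\frac{q^{b-1}-1+1}{q^{b-1}-1}\geq m+\frac{m}{q^{b-1}}$, so the ceiling is at least $m+\lceil \frac{m}{q^{b-1}}\rceil$). Applying this with $m=d_{b-1}(C)$ gives $d_b(C)\geq d_{b-1}(C)+\lceil d_{b-1}(C)/q^{b-1}\rceil$.

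Next I would set up an induction on $b$. The induction hypothesis is exactly the claimed bound $d_{b-1}(C)\geq \sum_{i=0}^{b-2}\lceil d_1(C)/q^i\rceil$; the base case $b=1$ is trivial. For the inductive step, combine $d_b(C)\geq d_{b-1}(C)+\lceil d_{b-1}(C)/q^{b-1}\rceil$ with the hypothesis. It remains to show that $\lceil d_{b-1}(C)/q^{b-1}\rceil \geq \lceil d_1(C)/q^{b-1}\rceil$, which is immediate from monotonicity of the ceiling together with $d_{b-1}(C)\geq d_1(C)$ (a consequence of Theorem~\ref{thm10}, part~1, or directly of inequality~(\ref{ineq})). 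Then
$$d_b(C)\;\geq\;\sum_{i=0}^{b-2}\left\lceil\frac{d_1(C)}{q^i}\right\rceil + \left\lceil\frac{d_1(C)}{q^{b-1}}\right\rceil \;=\;\sum_{i=0}^{b-1}\left\lceil\frac{d_1(C)}{q^i}\right\rceil,$$
which is~(\ref{new1}).

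The main obstacle I anticipate is making sure the induction actually propagates the sharpened (sum-of-ceilings) form rather than the weaker single-ceiling form. The naive iteration of~(\ref{new4}) loses integrality at every intermediate stage because one feeds a real number back into the recursion; the fix is to always re-invoke Lemma~\ref{Ferruh13} on the integer $d_{b-1}(C)$ and immediately peel off one ceiling term, keeping the rest of the bound exact. A secondary technical point is verifying the pointwise estimate $\lceil\frac{q^b-1}{q^b-q}m\rceil\geq m+\lceil m/q^{b-1}\rceil$ rigorously for all $q\geq 2$ and all $b\geq 2$; this is elementary but should be stated as a small lemma, since it is the arithmetic heart of the improvement. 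Once that is in place, everything else is bookkeeping with ceilings and the monotonicity $d_1(C)\leq d_2(C)\leq\cdots$ guaranteed for cyclic codes.
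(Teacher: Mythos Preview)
Your key pointwise inequality is false. You claim that for every positive integer $m$,
\[
\left\lceil \frac{q^b-1}{q^b-q}\,m\right\rceil \;\geq\; m + \left\lceil \frac{m}{q^{b-1}}\right\rceil,
\]
and justify it by the algebra $\frac{q-1}{q^b-q}m = \frac{m}{q^{b-1}}\cdot\frac{q^{b-1}}{q^{b-1}-1}$. That identity is wrong: the left side equals $\frac{(q-1)m}{q(q^{b-1}-1)}$, the right side equals $\frac{m}{q^{b-1}-1}$, and these differ by the factor $\frac{q-1}{q}<1$. In fact, for $b\geq 3$ one has $\frac{q-1}{q(q^{b-1}-1)}<\frac{1}{q^{b-1}}$, so the inequality reverses. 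A concrete counterexample: $q=2$, $b=3$, $m=6$ gives $\lceil\frac{7}{6}\cdot 6\rceil = 7$, while $6+\lceil 6/4\rceil = 8$. Hence the inductive step, in the form $d_b(C)\geq d_{b-1}(C)+\lceil d_{b-1}(C)/q^{b-1}\rceil$, is not a consequence of Lemma~\ref{Ferruh13}, and the induction as written does not go through.

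The paper's proof takes a completely different route and avoids any recursion on $b$. It picks a codeword $\ccc$ with $w_b(\ccc)=d_b(C)$; by Lemma~\ref{new7} the matrix $G_b(\ccc)$ has rank~$b$, and because $C$ is cyclic the rowspace of $G_b(\ccc)$ is a $b$-dimensional subcode of $C$, so its minimum Hamming distance is at least $d_1(C)$. Deleting the all-zero columns of $G_b(\ccc)$ yields a $[w_b(\ccc),\,b,\,\geq d_1(C)]$ linear code, and the classical Griesmer bound applied to \emph{that} code gives $d_b(C)=w_b(\ccc)\geq\sum_{i=0}^{b-1}\lceil d_1(C)/q^i\rceil$ in one stroke. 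The sum-of-ceilings structure is inherited directly from Griesmer, not assembled inductively. If you want to repair your approach, the correct splitting from Lemma~\ref{Ferruh13} is $d_b(C)\geq d_{b-1}(C)+\lceil\frac{q-1}{q^b-q}\,d_{b-1}(C)\rceil$, and then you need the \emph{iterated} lower bound $d_{b-1}(C)\geq\frac{q^{b-1}-1}{q^{b-2}(q-1)}\,d_1(C)$ (not merely $d_{b-1}(C)\geq d_1(C)$) to conclude $\lceil\frac{q-1}{q^b-q}d_{b-1}(C)\rceil\geq\lceil d_1(C)/q^{b-1}\rceil$.
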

\begin{proof}Let $\ccc$ be a codeword with $b$-symbol weight $w_b(\ccc)=d_b(C)$. By Lemma \ref{new7}, the rank of $G_b(\ccc)$ equals $b$.
Let $C_1$ be the linear code which generated by $G_b(\ccc)$. Since $Rank(G_b(\ccc))=b$, the parameters of $C_1$ are $[n,b,d_1(C_1)]$ with $d_1(C_1)\leq d_1(C)$. Assume that $$G_b(\ccc)=\left(
             \begin{array}{cccc}
               \mathbf{r}_1 & \mathbf{r}_2 & \cdots & \mathbf{r}_n \\
             \end{array}
           \right)_{b\times n},
$$
where $\mathbf{r}_i$ are the columns of $G_b(\ccc)$ with $1\leq i\leq n$. Let $$G_b^{\prime}(\ccc)=\left(
                                    \begin{array}{cccc}
                                      \mathbf{r}_{j_1} & \mathbf{r}_{j_2} & \ldots & \mathbf{r}_{j_m} \\
                                    \end{array}
                                  \right)_{b\times m},
$$ where $r_{j_1},\ldots,r_{j_m}$ are all the nonzero columns of $G_b(\ccc)$. Since the number of all the nonzero columns of $G_b(\ccc)$ equals $w_b(\ccc)$, $m=w_b(\ccc)$. Then the parameters of the linear code $C_2$ which generated by $G_b^{\prime}(\ccc)$ are $[w_b(\ccc),b,d_1(C_2)=d_1(C_1)\geq d_1(C)]$. By the Griesmer Bound, we have
\begin{equation*}
  d_b(C)=w_b(\ccc)\geq \sum_{i=0}^{b-1}\left\lceil\frac{d_1(C)}{q^i}\right\rceil.
\end{equation*}
Therefore, we obtain the desired result.
\end{proof}
Assume that $b|k$ and $k=tb$. By the preceding theorem, we have
\begin{equation}\label{new5}
  \sum_{i=0}^{t-1}
   \left\lceil\frac{d_b(C)}{q^{bi}}\right\rceil\geq
   \sum_{i=0}^{t-1}\left\lceil\frac{\sum_{j=0}^{b-1}
   \left\lceil\frac{d_1(C)}{q^j}\right\rceil}{q^{bi}}\right\rceil.
\end{equation}
From the Griesmer Bound, we have
\begin{equation}\label{new6}
  n\geq \sum_{i=0}^{tb-1}\left\lceil\frac{d_1(C)}{q^i}\right\rceil
  =\sum_{i=0}^{t-1}\sum_{j=0}^{b-1}\left\lceil\frac{d_1(C)}{q^{ib+j}}
  \right\rceil =\sum_{i=0}^{t-1}\sum_{j=0}^{b-1}\left\lceil\frac{\frac{d_1(C)}
  {q^j}}{q^{bi}}\right\rceil.
\end{equation}
Combining the Inequalities (\ref{new5}) and (\ref{new6}), a natural motivation for us is to explore the relation between $n$ and $\sum_{i=0}^{t-1}
   \left\lceil\frac{d_b(C)}{q^{bi}}\right\rceil.$ Then we have the following conjecture.

\begin{conjecture}\label{con1}{\rm($b$-Symbol Griesmer Bound for cyclic codes)}
Assume that $b|k$ and $k=tb$. If $C$ is a cyclic code with parameters $[n,k]$ over $\F_q$, then
\begin{equation}\label{conjecture1}
  n\geq\sum_{i=0}^{t-1}\left\lceil\frac{d_b(C)}{(q^b)^i}\right\rceil.
\end{equation}
\end{conjecture}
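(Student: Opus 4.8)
The plan is to reduce the conjecture to a single missing ingredient, namely a comparison between the minimum $b$-symbol weight $d_b(C)$ and the $b$th generalized Hamming weight $\dd_b(C)$. The part that goes through with the tools already at hand is the bound with $\dd_b(C)$ in place of $d_b(C)$. Apply the generalized Griesmer Bound of Lemma~\ref{ddr} with $r=b$; since $b\mid k$ we have $k-b=(t-1)b$, so
$$n\ \ge\ \dd_b(C)+\sum_{i=1}^{(t-1)b}\left\lceil\frac{q-1}{q^i(q^b-1)}\,\dd_b(C)\right\rceil .$$
Split the index set $\{1,\dots,(t-1)b\}$ into the $t-1$ consecutive blocks $B_l=\{(l-1)b+1,\dots,lb\}$ for $l=1,\dots,t-1$. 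A finite geometric sum gives $\sum_{i\in B_l}\frac{q-1}{q^i(q^b-1)}=q^{-lb}$, so by superadditivity of the ceiling function on each block, $\sum_{i\in B_l}\bigl\lceil\tfrac{q-1}{q^i(q^b-1)}\dd_b(C)\bigr\rceil\ge\lceil\dd_b(C)/q^{lb}\rceil$. Summing over $l$ and using $\dd_b(C)=\lceil\dd_b(C)/(q^b)^0\rceil$ yields
$$n\ \ge\ \sum_{l=0}^{t-1}\left\lceil\frac{\dd_b(C)}{(q^b)^l}\right\rceil ,$$
which is exactly \eqref{conjecture1} with $\dd_b(C)$ instead of $d_b(C)$.

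It therefore remains to show that for a cyclic code $d_b(C)=\dd_b(C)$ (the case $t=1$, i.e.\ $b=k$, being trivial as then $d_b(C)=n$ by Theorem~\ref{Thm9}). By Theorem~\ref{new8} we already know $d_b(C)\ge\dd_b(C)$, and Lemma~\ref{new7} exhibits why: for $b<k$ one has $d_b(C)<n$, so a minimum-$b$-symbol-weight codeword $\ccc$ satisfies $\operatorname{rank}G_b(\ccc)=b$, whence $C_1=\langle G_b(\ccc)\rangle$ is a $b$-dimensional subcode of the cyclic code $C$ with $|\chi(C_1)|=|\mathcal I_b(\ccc)|=w_b(\ccc)=d_b(C)$. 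The missing direction is the converse: to show that a minimum-support $b$-dimensional subcode $D$ of $C$ can be traded for one of the shift-generated form $\langle G_b(\ccc)\rangle$ without increasing the support. I would attack this using the cyclic structure of $C$ — for instance by passing to the shortened code on $\chi(D)$ (which contains $D$, hence has dimension $\ge b$) and using shift-invariance to slide a short run pattern into place, or by decomposing $C$ into its irreducible cyclic constituents, on each of which the $b$-symbol weight and the $b$th generalized Hamming weight agree (cf.\ Remark~\ref{r4} and Theorem~\ref{single1}, where single-$\mathbf{0}$'s-run cyclic codes carry a single $b$-symbol weight), and then analysing how both invariants behave under the corresponding direct sum.

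An alternative is a direct Griesmer-style induction on $t$ for $d_b$ of cyclic codes: with $\ccc$ and $C_1$ as above, project $C$ onto the $n-d_b(C)$ coordinates outside $\mathcal I_b(\ccc)$ to get a code $C'$ of length $n-d_b(C)$, argue $\dim C'=(t-1)b$ and $d_b(C')\ge\lceil d_b(C)/q^b\rceil$, and recurse. The main obstacle — and the reason the statement is still only a conjecture — is that $C'$ is in general not cyclic, so neither the induction hypothesis nor Lemma~\ref{new7} applies to it: for an arbitrary linear code a minimum-$b$-symbol-weight codeword need not have $\operatorname{rank}G_b(\ccc)=b$ (Lemma~\ref{new7} genuinely uses that every coordinate of a cyclic code is active, forcing $\vartheta(\ccc)\ge b$ on the minimizing word), and the residual dimension may drop by more than $b$. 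So the crux is precisely to close the gap between $d_b(C)$ and $\dd_b(C)$ for cyclic codes — equivalently, to isolate the correct self-contained inductive statement that survives the loss of cyclicity; once that is available, the first step above finishes the proof.
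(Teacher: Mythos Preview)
This statement is left as an open \emph{conjecture} in the paper; no proof is given beyond the trivial endpoints $b=1$ (the classical Griesmer bound) and $b=k$ (Theorem~\ref{Thm9}). You are explicit that your write-up is a reduction strategy rather than a proof, and your diagnosis of the obstacles is accurate and matches the authors'.

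On your first route, the derivation of $n\ge\sum_{l=0}^{t-1}\lceil\dd_b(C)/(q^b)^l\rceil$ from Lemma~\ref{ddr} via the block-by-block geometric sum and ceiling superadditivity is correct. The gap is that Theorem~\ref{new8} supplies the inequality in the \emph{wrong} direction for this purpose: it gives $d_b(C)\ge\dd_b(C)$, so swapping in the (at least as large) quantity $d_b(C)$ on the right-hand side only strengthens the asserted lower bound on $n$ --- it does not follow from what you have. You would need $d_b(C)\le\dd_b(C)$ (hence equality), and nothing in the paper establishes this reverse inequality; indeed that equality for all cyclic codes would be a substantially stronger statement than Conjecture~\ref{con1} itself. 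Your idea of passing to irreducible cyclic summands is natural, but Theorem~\ref{single1} only covers the very special sub-family $\Delta\mid q-1$, $\gcd(n,\Delta)=1$, so the direct-sum analysis would require real new work.

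Your second route --- Griesmer-style induction on $t$ via the residual code on the complement of $\mathcal I_b(\ccc)$ --- is exactly the approach the authors themselves attempted; it survives in the source after \texttt{\textbackslash end\{document\}}, where it is formalised as a further conjecture on the residual parameters, namely that one obtains an $[\,n-w_b(\ccc),\ k-b,\ d_b'\ge d_b(C)-w_b(\ccc)+\lceil w_b(\ccc)/q^b\rceil\,]$ code, after which the standard induction goes through. You correctly identify the obstruction: the residual code is not cyclic, so neither the inductive hypothesis nor Lemma~\ref{new7} is available there. In summary, your analysis is sound and aligns with the paper's; neither closes the gap, which is why the statement remains a conjecture.
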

Inequality (\ref{conjecture1}) holds if $b=k$ or $b=1$ from Theorem \ref{Thm9} and the Griesmer Bound. The reader is warmly invited to attack Conjecture \ref{con1} when $b>1$ and $t>1$.


\section{Conclusions and open problems}
Many results in this paper generalize the previous work (e.g., Theorem \ref{relation}, Theorem \ref{thm5}).
In this paper, the $b$-symbol weight of a vector $\ccc$ is first calculated by calculating the $\mathbf{0}$'s run distribution of $\ccc$ (Theorem \ref{wbformula}). This provides a new way to calculate the $b$-symbol weight distribution of linear codes (e.g., in Remark \ref{r4} and Theorem \ref{single1} we calculated the $b$-symbol weight distribution of some special codes based on their $\mathbf{0}$'s run distributions).
 However, for general linear codes (or cyclic  codes), it is still a difficult task to determine their $\mathbf{0}$'s run distributions.
 A potential direction is to determine the $b$-symbol weight distributions of some special linear codes by exploring their $\mathbf{0}$'s run distributions.

Another highlight of this paper is that we first studied the connections and differences between $b$-symbol metric and $r$-th generalized Hamming metric. As two different generalizations of Hamming metric, they have a lot in common.
 When $C$ is cyclic, a very important relation between $d_b(C)$ and $\dd_b(C)$ is given (Theorem \ref{new8}).
 It is a pity that this paper fails to give the $b$-symbol Griesmer Bound for cyclic codes and only gives a conjecture (Conjecture \ref{con1}).
 It is important to note that $b$-symbol Griesmer bound does not apply to arbitrary linear codes (this is one of the differences from Griesmer Bound and generalized Griesmer Bound). However, the $b$-symbol Griesmer Bound is not restricted to cyclic codes; in fact, it is also applicable to constacyclic codes.
\section*{Acknowledgement}
This research is supported by Natural Science Foundation of China (12071001), Excellent Youth Foundation of Natural Science Foundation of Anhui Province (1808085J20), The Research Council of Norway under grant  (247742/O70).



\end{document}